\newtheorem{lemma}{Lemma}
\newtheorem{theorem}{Theorem}
\newtheorem{remark}{Remark}
\newcommand{\id}{\mathds{1}}
\newcommand{\comment}[1]{}
\newcommand{\supp}[0]{\textup{supp}}
\newtheorem{definition}{Definition}
\newcommand{\CC}{\mathbb{C}}
\newcommand{\cG}{\mathcal{G}}
\newcommand{\eps}{\varepsilon}
\DeclareMathOperator{\poly}{poly}
\title{A Faster Algorithm for the Free Energy in One-Dimensional Quantum Systems}
\author{Samuel O. Scalet\thanks{Department of Applied Mathematics and Theoretical Physics, University of Cambridge, United Kingdom \indent\indent sos25@cam.ac.uk}}
\date{}
\begin{document}
\maketitle
\begin{abstract}
We consider the problem of approximating the free energy density of a translation-invariant, one-dimensional quantum spin system with finite range.
While the complexity of this problem is nontrivial due to its close connection to problems with known hardness results, a classical subpolynomial-time algorithm has recently been proposed [Fawzi et al., 2022].
Combining several algorithmic techniques previously used for related problems, we propose an algorithm outperforming this result asymptotically and give rigorous bounds on its runtime.
Our main techniques are the use of Araki expansionals, known from results on the nonexistence of phase transitions, and a matrix product operator construction.
We also review a related approach using the Quantum Belief Propagation [Kuwahara et al., 2018], which in combination with our findings yields an equivalent result.
\end{abstract}
\section{Introduction}

Making predictions of physical quantities in thermal equilibrium for quantum many-body systems is a central goal of condensed matter physics.
Despite a plethora of succesful algorithms succeeding this task in many settings, the field of Hamiltonian complexity has also shown obstacles to this task that cannot be overcome.
Thereby, algorithms are usually either heuristic or specialized to settings in which no complexity theoretic barrier rules out the existence of efficient algorithms.

The \textit{free energy} $F=E-TS$ is an example of such a quantity and has been addressed in many previous works on both, constructive results designing algorithms and complexity theoretic reductions proving hardness.
The interest in the free energy stems from both its immediate physical interest as the value of a functional which variationally characterizes the Gibbs state and also its proximal use in calculations of observables since their computation can be reduced to the computation of derivatives of the free energy.
Two examples of regimes where provably efficient algorithms exist are lattice systems at high temperature \cite{Mann2021} and one dimensional systems at arbitrary nonzero temperature \cite{kuwahara2018,fawzi2022}.
Remarkably, the computational efficiency is intimately linked to the nonexistence of thermal phase transitions, a connection that has also formally been established \cite{harrow2020}.

In this work, we focus on the one-dimensional setting and further specialize to the case of translation-invariant systems in the thermodynamic limit, i.e., infinite system size, which is arguably the physically most relevant.
Despite the reduction of degrees of freedom in the problem input, the existence of efficient algorithms is nontrivial.
The hardness of a related problem, the translation-invariant 1D ground-state problem, which arises as the zero temperature limit of the free energy has been proven in \cite{Gottesman2013,Bausch2017}, see also the discussion in \cite{fawzi2022} on the relation to hardness of the free energy density problem.

Many algorithms in the 1D case are formulated for finite systems.
Since the concept of thermal equilibrium is derived from an infinte system limit, these algorithms will usually be used for large system sizes to approximate the thermodynamic limit.
The error in approximating a system in the thermodynamic limit by a finite one is standardly bounded by the inverse of the system size and consequently this approach does not allow to perform better than linearly in the inverse error.
In the recent work \cite{fawzi2022}, the authors overcome this barrier and introduce a new algorithmic approach that directly treats an infinite system size.
The resulting algorithm runs in subpolynomial time in the inverse error, i.e., in time less than the inverse error to any constant power greater zero.

\paragraph{Setup and main result}
Let us introduce the setup of one-dimensional translation-invariant spin systems in the thermodynamic limit.
The local Hilbert space is $\CC^d$ and the system is defined via a nearest neighbour interaction $h\in\mathcal B(\CC^d\otimes\CC^d)$ with $\|h\|\le1$.
Let us define the Hamiltonian for a finite chain of length $N$ as
\[
H_{[1,N]}=\sum_{i=1}^{N-1}h_{i,i+1},
\]
where $h_{i,i+1}$ has support on sites $i,i+1$.
The case of higher range interactions can be treated by appropriately blocking the system.

The free energy density for a given inverse temperature $\beta$ is defined as
\begin{equation}\label{eq:fdef}
f(h)=-\lim_{N\to\infty} \frac1{\beta N}\log(Z_N)=-\lim_{N\to\infty}\frac1{\beta N}\log(\Tr\left[e^{-\beta H_{[1,N]}}\right]).
\end{equation}

Our main result is the following subpolynomial-time algorithm for approximating the free energy density.
It improves on the asymptotics in the prior work \cite{fawzi2022}, which only gives a runtime of $\exp(\mathcal O(\log(1/\eps)/\log(\log(1/\eps))))$.
\begin{theorem}\label{thm:main}
There exists a classical algorithm taking as input a hermitian matrix $h\in\mathbb{C}^{d^2\times d^2}$ and a precision $1>\eps>0$ and running in time
\[
\exp(\sqrt{ A\log(\frac{ B}\eps)}\log( A\log(\frac{ B}\eps))),
\]
where $A,B$ are constants depending on the inverse temperature and local dimension only, outputting an approximation $\tilde f$ to the one-dimensional translation-invariant free energy density $|f-\tilde f|\le\eps$.
\end{theorem}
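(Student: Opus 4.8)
The plan is to reduce the free energy density to the logarithm of the dominant eigenvalue of a transfer matrix built from a matrix product operator (MPO) approximation of $e^{-\beta H}$, and then to bound the bond dimension needed for a target accuracy. Concretely, if we approximate $e^{-\beta H}$ by a translation-invariant MPO with local tensor $A^{ij}$ of bond dimension $D$, then tracing the physical indices produces a $D\times D$ transfer matrix $E=\sum_i A^{ii}$, and $\Tr[e^{-\beta H_{[1,N]}}]$ grows like $\lambda_{\max}(E)^N$, so that $f=-\frac1\beta\log\lambda_{\max}(E)$ up to boundary corrections. First I would make this identification rigorous, controlling the subleading eigenvalues and the boundary terms so that the finite-$N$ partition function converges to the density at an exponential rate. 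Here the exponential clustering of one-dimensional systems at nonzero temperature, established through Araki's analyticity results, is what guarantees a spectral gap in the relevant transfer structure and hence the required convergence.

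The second step is the construction of the MPO itself. I would use the Araki expansional to write $e^{-\beta H}$ in a form that factorizes across the chain up to exponentially localized corrections: adding one site perturbs the state only within a window of width $O(\log(1/\eps))$, and the expansional supplies an explicit, analytic, exponentially convergent series for this boundary dressing. Truncating both the spatial range of the dressing and the order of the expansional series yields a finite-bond-dimension MPO whose tensor is assembled from operators supported on $O(\log(1/\eps))$ sites. The error incurred in $e^{-\beta H}$ must then be propagated to an error in $\log\lambda_{\max}(E)$; since the logarithm and the leading eigenvalue are Lipschitz in the relevant regime (again using the transfer-matrix gap), a polynomially small MPO error suffices, and the logarithm and eigenvalue are Lipschitz there.

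The crux, and the step I expect to be the main obstacle, is the bond-dimension/accuracy tradeoff that produces the $\sqrt{\log(1/\eps)}$ scaling. Truncating the expansional at order $m$ forces one to manipulate operators on $\Theta(m)$ sites, so that once the admissible terms of each order are counted the bond dimension grows like $D=\exp(\Theta(m\log m))$; the corresponding truncation error, however, should decay superexponentially, like $\exp(-\Theta(m^2))$, because the factorial suppression of the expansional series combines with the exponential spatial clustering. Balancing $\exp(-\Theta(m^2))\le\eps$ forces $m=\Theta(\sqrt{\log(1/\eps)})$ and hence $\log D=\Theta(\sqrt{\log(1/\eps)}\,\log\log(1/\eps))$, which is the source of the claimed runtime and of the improvement over the order-$m$, error-$1/m!$ tradeoff underlying the $\exp(O(\log(1/\eps)/\log\log(1/\eps)))$ bound of \cite{fawzi2022}. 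Making the constant in the $\exp(-\Theta(m^2))$ estimate explicit and uniform in $\beta$ and $d$ is the delicate analytic point, as it is precisely what determines the constants $A,B$.

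Finally I would assemble the pieces: with $m=\Theta(\sqrt{\log(1/\eps)})$ and $D=\exp(O(\sqrt{\log(1/\eps)}\,\log\log(1/\eps)))$, building the MPO tensor, forming $E$, and computing its dominant eigenvalue (by power iteration or direct diagonalization) all cost $\poly(D)$, and the logarithm is evaluated to the same accuracy. Absorbing the dependence on $\beta$ and $d$ into $A$ and $B$ then yields total runtime $\exp(\sqrt{A\log(B/\eps)}\log(A\log(B/\eps)))$. As a consistency check, and as indicated in the excerpt, the same estimate should be reproducible through the Quantum Belief Propagation representation of the boundary dressing, which gives an alternative route to the localized correction and to the identical bond-dimension count.
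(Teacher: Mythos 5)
Your proposal does not follow the paper's route, and as written it has two genuine gaps, one structural and one at exactly the step you yourself identify as the crux. Structurally, the paper never works with a transfer matrix in the thermodynamic limit: it first reduces the density $f$ to a \emph{finite-size} quantity, namely the ratio $Z_{l+1}/Z_l$ with $l=O(\log(1/\eps))$ (Lemma~\ref{lem:partFuncRatio}, proven via the Araki expansionals of Lemma~\ref{lem:arakiExpansionals} together with the local indistinguishability bound of Lemma~\ref{lem:localIndistinguishability}), and only then invokes the MPO algorithm of \cite{kuwahara2021} (Theorem~\ref{thm:MPO}) as a black box on chains of length $l$ and $l+1$, with polynomially small trace-norm accuracy. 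Your plan instead needs the identification $f=-\frac1\beta\log\lambda_{\max}(E)$ for the transfer matrix $E$ of an \emph{approximate}, translation-invariant MPO. The guarantee available for such constructions (including Theorem~\ref{thm:MPO}) is a relative trace-norm error on a finite chain, which is neither a per-site (tensor-level) error nor uniform in $N$; converting it into control of $\lambda_{\max}(E)$ requires a spectral gap and a perturbation bound for the transfer matrix of the \emph{truncated} object, and your appeal to Araki's clustering does not supply this --- clustering is a property of the physical Gibbs state, not of the transfer matrix of whatever finite-$D$ tensor you construct, and uncontrolled per-site errors can accumulate exponentially in $N$. This is precisely the pitfall the paper flags when it emphasizes that \cite{kuwahara2021} achieves polynomial bond dimension \emph{without any truncation}, and it is what Lemma~\ref{lem:partFuncRatio} is designed to avoid.

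The second gap is quantitative and fatal to the claimed runtime. You assert that truncating the expansional construction at order $m$ costs bond dimension $\exp(\Theta(m\log m))$ while incurring error $\exp(-\Theta(m^2))$, attributing the latter to ``factorial suppression combined with exponential clustering.'' But factorial suppression gives $1/m!=\exp(-\Theta(m\log m))$, and multiplying by an exponential factor $\exp(-\Theta(m))$ still gives only $\exp(-\Theta(m\log m))$; nothing in your sketch squares the exponent. Balancing $\exp(-\Theta(m\log m))\le\eps$ against bond dimension $\exp(\Theta(m))$ or $\exp(\Theta(m\log m))$ yields $m=\Theta(\log(1/\eps)/\log\log(1/\eps))$ and hence a runtime of $\exp(O(\log(1/\eps)/\log\log(1/\eps)))$ --- exactly the bound of \cite{fawzi2022} that Theorem~\ref{thm:main} is supposed to improve upon. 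The genuine error-versus-bond-dimension tradeoff with the $\sqrt{\log(1/\eps)}$ scaling is the main technical content of \cite{kuwahara2021}, obtained by a rather different construction (a cluster-expansion/polynomial scheme, not a direct truncation of Araki expansionals), and the paper's proof simply cites it; your proposal instead promises to rederive it by a mechanism that provably falls short. Fixing your argument would therefore require either importing Theorem~\ref{thm:MPO} as a black box (and then you still need the finite-size reduction of Lemma~\ref{lem:partFuncRatio}, since that theorem applies to finite chains), or reproving the central result of \cite{kuwahara2021} from scratch.
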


\paragraph{Overview of Approach}
We give an overview of our approach to compute the free energy density.
It breaks down into two steps: We first relate the global free energy density to an increment of partition functions, an idea inspired by \cite{kuwahara2018}.
Secondly, we use a matrix product operator (MPO) construction from \cite{kuwahara2021} to compute these partition functions individually.
Combining these error estimates gives our main result.

Compared to \cite{kuwahara2018}, we give a strongly simplified proof of the first step based on Araki expansionals \cite{araki1969,perezgarcia2020}.
\begin{lemma}\label{lem:partFuncRatio}
There exist constants $A, \xi>0$ depending on $\beta$ only such that
\[
\left|e^{-\beta f(h)}-\frac{Z_{l+1}}{Z_l}\right|\le A e^{-l/\xi}
\]
\end{lemma}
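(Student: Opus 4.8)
The plan is to show that the free energy density, defined as a limit of $\frac1{\beta N}\log Z_N$, can be rewritten as a limit of the *logarithmic increment* $\frac1\beta\log(Z_{l+1}/Z_l)$, and then to bound the rate at which this increment converges. The quantity $e^{-\beta f(h)}$ is precisely the limit of the ratios $Z_{l+1}/Z_l$ as $l\to\infty$: writing $\log Z_N = \sum_{l=1}^{N-1}\log(Z_{l+1}/Z_l) + \log Z_1$ as a telescoping sum, the Cesàro-type average $\frac1N\log Z_N$ converges to the same limit as the individual terms $\log(Z_{l+1}/Z_l)$, provided the latter converge. So the content of the lemma is really a *rate* statement: the increments converge exponentially fast in $l$, with the constant $\xi$ playing the role of a correlation or decay length.

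The technical engine I would use is the Araki expansional machinery referenced in the excerpt. The central object is the ratio
\[
\frac{Z_{l+1}}{Z_l}=\frac{\Tr e^{-\beta H_{[1,l+1]}}}{\Tr e^{-\beta H_{[1,l]}}},
\]
where $H_{[1,l+1]}=H_{[1,l]}+h_{l,l+1}$ differs from $H_{[1,l]}$ by a single boundary interaction term. The strategy is to interpolate between the two Hamiltonians and express the ratio via an expansional (a Dyson-type series in the perturbation $h_{l,l+1}$). Araki's theory of analyticity and exponential clustering in one dimension gives that the effect of a perturbation localized near site $l$ on an expectation value decays exponentially with distance from the perturbation. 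Concretely, I would argue that $Z_{l+1}/Z_l$ equals the thermal expectation of a boundary-localized expansional operator, and that adding a further site — i.e. comparing to $Z_{l+2}/Z_{l+1}$ — changes this expectation only by an amount exponentially small in $l$, because the added interaction is at distance $l$ from the left boundary whose influence is being measured. By telescoping these consecutive differences, the gap between $Z_{l+1}/Z_l$ and its limit $e^{-\beta f(h)}$ is bounded by a convergent geometric series $\sum_{k\ge l} A' e^{-k/\xi}$, which is itself of order $e^{-l/\xi}$ up to adjusting the prefactor $A$.

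The crucial quantitative input — and where I expect the main obstacle to lie — is establishing the exponential decay constant $\xi$ rigorously and uniformly. In one dimension at any nonzero temperature there are no thermal phase transitions, and correspondingly the Gibbs state exhibits exponential clustering; the Araki expansional formalism is exactly the tool that converts this qualitative fact into analytic bounds with an explicit, $\beta$-dependent decay length. The delicate points are: (i) controlling the norm of the expansional series uniformly in the system size $l$, so that the bounds survive the thermodynamic limit; (ii) ensuring the decay estimate is genuinely exponential rather than merely summable, since the stated bound $Ae^{-l/\xi}$ demands a clean geometric rate; and (iii) tracking the dependence of $A$ and $\xi$ on $\beta$ alone, independent of $l$ and of the particular interaction $h$ (beyond $\|h\|\le 1$). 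I would lean on the existing expansional convergence estimates from \cite{araki1969,perezgarcia2020} to supply the analyticity radius and the exponential bound, and the remaining work is to assemble these into the telescoping argument and verify that the boundary terms $\log Z_1$ and finite-$l$ corrections do not spoil the rate. The appeal of this route, as the excerpt advertises, is that it bypasses the more involved Quantum Belief Propagation argument of \cite{kuwahara2018} and reduces the whole step to a single clustering estimate.
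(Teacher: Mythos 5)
Your proposal follows essentially the same route as the paper: it represents $Z_{l+1}/Z_l$ as the expectation of a boundary-localized Araki expansional in a finite Gibbs state, controls differences of these ratios by the exponential decay of the influence of a distant added site (the paper's local indistinguishability, Lemma~\ref{lem:localIndistinguishability}, combined with the uniform boundedness and superexponential locality of the expansionals, Lemma~\ref{lem:arakiExpansionals}), and identifies the limit with $e^{-\beta f(h)}$ via the same telescoping/Ces\`aro argument. The only differences are points of bookkeeping your plan leaves implicit: the paper establishes the Cauchy property by comparing $Z_{m+1}/Z_m$ with $Z_{l+1}/Z_l$ for all $m\ge l$ after first truncating the quasi-local expansional to an operator of strict support (so that local indistinguishability applies), and it secures positivity of the limit via the bound $\left|\log(Z_{m+1}/Z_m)\right|\le\beta\|h\|$.
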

A non-quantitative version has been proven in \cite{fawzi2022}.
Our approach is closely related but provides an explicit convergence rate.
While the locality of Araki expansionals holds up to \emph{superexponentially} small errors, our proof involves a result on local indistinguishability that only allows for exponential convergence.
Despite this superexponential decay being essential to obtain a \emph{subpolynomial} runtime in \cite{fawzi2022}, we still achieve this property due to the subexponential runtime in system size and inverse error of the MPO algorithm.

The proof in \cite{kuwahara2018} instead uses a tool called \textit{quantum belief propagation} \cite{hastings2007,capel2023}.
At first sight its range of applicability is much wider allowing for non-translation-invariant and higher dimensional systems and also has a better dependence on the temperature.
However, the proof is conditional on an additional ingredient, a result on the decay of correlations.
While decay of correlations has been proven for one-dimensional translation-invariant infinite \cite{araki1969} and, only recently, finite \cite{bluhm2022} systems, the validity of the condition in \cite{kuwahara2018} is not entirely clear as it is needed for a \emph{perturbed} and finite Gibbs state.
To clarify these connections, we review this approach in Appendix~\ref{sec:QBP} giving a slightly modified proof that replaces the decay of correlations entirely by a local indistinguishability result that has been proven rigorously \cite{perezgarcia2020}.
While not the focus of this manuscript, we note that for higher dimensional systems, the approach based on quantum belief propagation still works and, when further restricting to the high-temperature regime, the decay of correlations has been proven as well, even for non-translation-invariant Gibbs states.
It should be noted that in this regime, the algorithm yields only a quasipolynomial runtime, which falls behind the state of the art cluster expansion techniques \cite{Mann2021}, running in only polynomial time.

We also want to comment on an alternative way of reducing an algorithm for local observables, such as the MPO subroutine we use, to the free energy problem.
The equivalence of free energy and local observable expectation values is folklore knowledge due to a relation known as the derivative trick, giving expectation values of local observables as derivatives of the free energy.
A precise result can be found in \cite{bravyi2022}, which inverts this relation using a discretized integration of the derivative of the free energy.
Suffering a polynomial overhead in the inverse error, this approach does not recover a subpolynomial runtime.
We note, however, the interesting analogy that an iteration over a discretized temperature interval in \cite{bravyi2022} and the iteration over sites in \cite{kuwahara2018}, both yield a product formula for the partition function.

\paragraph{Practical aspects}
Many algorithms with provable runtime bounds, including ours, come with a number of parameters and explicit bounds on those that guarantee the desired accuracy.
These bounds are often impractically large to actually run the provably correct algorithm, but choosing the parameters within the limits of the computational resources available provides an immediate heurestic approach that often gives good approximations.
To this end, it is helpful to limit the number of these parameters, so that this heuristic choice of parameters becomes more feasible.

In our case, the involved quantities are the length scale in Lemma~\ref{lem:partFuncRatio}, as well as two parameters hidden inside the MPO part of the algorithm, a length scale and a degree of polynomial approximation.
It is a major theoretical contribution in \cite{kuwahara2021}, that these parameters yield a polynomial bond dimension \emph{without the need for any truncation}.
Such truncations are common in the literature on tensor network algorithms, but their impact on trace norm errors is uncontrolled\footnote{However, the 2-norm error can be bounded, which renders these truncations useful also theoretically for pure states.}.
We note however, that they have been demonstrated to be succesful in practice \cite{chen2018} and could be integrated into the algorithm proposed here for a heuristic variant.
We note that in \cite{fawzi2022}, an alternative algorithm for the problem we consider, the practical implementation is much simpler since it contains only a single parameter which can be simply chosen as large as allowed by the computational resources and this has been demonstrated to yield good convergence results in practice.
Similarly, the original approach in \cite{kuwahara2018}, calculating the partition functions by diagonalization, would yield a straightforward practical implementation but comes at the cost of not having a theoretical subpolynomial runtime guarantee.

We finally remark, that while our algorithm could be reduced to the local observable problem (see \cite[Appendix B]{fawzi2022}), this can already be achieved using the MPO subroutine on its own directly.
An extension to marginals of larger subsytems has been put forward in \cite{alhambra2021}, which could also be extended to the thermodynamic limit.

\section{Notation}
We consider a multipartite spin system on a chain $\mathbb Z$, where to each site we associate a local Hilbert space $\mathcal{H}_i\cong \mathbb{C}^d$ and the Hilbert space on an interval $[k,l]$ is given by $\mathcal{H}_{[k,l]}=\bigotimes_{i=k}^l\mathcal{H}_i$.
As introduced earlier, we consider a two-site interaction $h\in\mathcal B(\mathbb{C}^d\otimes\mathbb{C}^d)$ and denote by $h_{i,i+1}$ this interaction with support on sites $i,i+1$.
The Hamiltonian for a given interval is then $H_{[k,l]}=\sum_{i=k}^{l-1} h_{i,i+1}$.
We denote by $\Tr[\cdot]$, $\tr[\cdot]=\tr[\cdot]/d^N$ the unnormalized and normalized respectively.
For notational convenience the trace will always be considered to be defined for the space on which the operators act.
We denote the partition function by $Z_N=\Tr[\exp(-\beta H_N)]$ keeping the inverse temperature implicit.
As before the free energy density is given as $f(h)=-1/\beta\lim_{N\to\infty}\log(Z_N)/N$.
We introduce the notation
\[
\rho_{[k,l]}=\frac{e^{-\beta H_{[k,l]}}}{\Tr[e^{-\beta H_{[k,l]}}]}
\]
for the \emph{Gibbs state} on a finite interval .

We introduce a matrix product operator (MPO) on a subsystem $\mathcal{H}_{[k,l]}$ as a set of $(k-l+1)d^2$ $D\times D$ matrices $M_{r,s}^i$ with $1\le r,s\le d$, $k\le i\le l$ defining a state $\rho[M]$ as
\[
\Tr[\rho[M] \bigotimes_{i=k}^l \ket{r_i}\bra{s_i}]=\Tr[\prod_{i=k}^l M_{r_i,s_i}^i].
\]
While MPOs are not used directly in our work, note that the above formula implies directly that the trace of an MPO $\Tr[\rho[M]]$ can be computed in time polynomial in $(k-l,D,d)$.
\section{Reduction to local partition functions}
In this section we prove Lemma~\ref{lem:partFuncRatio}.
We start by introducing some definitions and results regarding Araki expansionals \cite{araki1969}.
The main result of this work is a proof of the analyticity of the free energy, for translation-invariant one-dimensional finite-range systems.
This turns out to be closely related to various results on the decay of correlations as well as the computational complexity of those systems.
The main object of interest is the non-hermitian operator
\[
E_n=e^{-\beta H_{[1,n]}}e^{\beta H_{[2,n]}}.
\]
Due the noncommutativity of the terms in the Hamiltonian, this operator is in general supported on the whole interval $[1,n]$.
The following result shows, however, that it can be well approximated by a local operator and is uniformly bounded in $n$.
Note, that the result holds at arbitrary temperature in the case of finite-range interactions.
The version of the result we cite has been given for systems with exponentially decaying interactions above a threshold temperature, but extend to all temperatures in the finite-range setting.
\begin{lemma}[{\cite[Proposition 4.2]{perezgarcia2020}}]\label{lem:arakiExpansionals}
There is a constant $\cG$ and a superexponentially decaying function $\delta(\cdot)$ only depending on the temperature, but not on $n$ such that for all $a\ge n$
\begin{enumerate}
\item $\|E_n\|, \|E_n^{-1}\|\le\cG$
\item $\|E_n-E_a\|,\|E_n^{-1}-E_a^{-1}\|\le\delta(n)$
\end{enumerate}
\end{lemma}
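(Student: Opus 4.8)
The plan is to realise $E_n$ as an interaction-picture (Dyson) series and then control the spatial spread of its terms by a connected cluster expansion tailored to the one-dimensional finite-range geometry. First I would write $H_{[1,n]}=H_{[2,n]}+h_{1,2}$ and set $A=H_{[2,n]}$, $V=h_{1,2}$, so that $E_n=e^{-\beta(A+V)}e^{\beta A}$. Differentiating $U(t)=e^{-t(A+V)}e^{tA}$ and using that $A+V$ commutes with $e^{-t(A+V)}$ gives $U'(t)=-U(t)\,V(t)$ with $V(t)=e^{-tA}Ve^{tA}$ and $U(0)=\id$, hence
\[
E_n=\sum_{k\ge 0}(-1)^k\int_{0\le t_1\le\cdots\le t_k\le\beta} V(t_1)\cdots V(t_k)\,dt_1\cdots dt_k .
\]
The inverse $E_n^{-1}=e^{-\beta H_{[2,n]}}e^{\beta H_{[1,n]}}$ has the identical structure with the two Hamiltonians exchanged, so every estimate below transfers to it verbatim.

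The essential difficulty is that $\|V(t)\|$ is \emph{not} uniformly bounded, because imaginary-time conjugation is non-unitary; a naive term-by-term bound on the Dyson series diverges. Instead I would expand each factor $V(t)=e^{-tH_{[2,n]}}h_{1,2}e^{tH_{[2,n]}}$ through its nested-commutator series in the interaction terms $h_{i,i+1}$ and collect the whole expression into a sum over sequences of bonds. Since $h_{1,2}$ is anchored at site $2$ and each commutator enlarges the support by only one bond (finite range), a sequence contributes a nonzero operator only when its bonds form a \emph{connected} chain in $\mathbb Z$ containing site $2$. In one dimension this connectivity is the decisive geometric constraint: to reach site $n$ the chain must contain at least $\sim n$ distinct bonds.

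To quantify and conclude, I would group terms by the total order $k$ (the number of interaction factors): the time-ordered integrals supply $\beta^k/k!$, each factor contributes $\|h\|\le 1$, and the number of connected bond-chains of length $k$ through a fixed site grows only exponentially, say $\le C^k$. Hence all clusters of order $k$ have total weight at most $(C\beta)^k/k!$, the full series converges uniformly in $n$, and its value yields the uniform bound $\cG$ of the boundedness statement (and likewise for $E_n^{-1}$). For the locality statement, the difference $E_n-E_a$ with $a\ge n$ retains only clusters touching the bonds where $H_{[2,a]}$ and $H_{[2,n]}$ differ, i.e.\ bonds $(i,i+1)$ with $i\ge n$; being connected and anchored at site $2$, such clusters have order $k\ge c\,n$ for some $c>0$. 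Summing the tail,
\[
\|E_n-E_a\|\le\sum_{k\ge cn}\frac{(C\beta)^k}{k!}=:\delta(n),
\]
and since $\sum_{k\ge m}(C\beta)^k/k!\lesssim (eC\beta/m)^{m}$, the function $\delta(n)$ decays faster than any exponential, as claimed.

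The hard part will be making the connected cluster expansion fully rigorous and verifying that the $1/k!$ from the time-ordered integrals survives the combinatorial sum over clusters — equivalently, that imaginary-time evolution spreads only superexponentially far in $n$ despite being non-unitary. This is exactly where the combination of \emph{one dimension} and \emph{finite range} is indispensable at arbitrary temperature: in higher dimensions or for long-range interactions the connectivity count overwhelms the factorial and convergence survives only above a threshold temperature. This combinatorial estimate is the heart of Araki's analyticity theorem, and the careful bookkeeping of overlapping supports and commutator signs is where the genuine work lies.
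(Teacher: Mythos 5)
The paper does not prove this lemma at all: it is imported verbatim from \cite[Proposition 4.2]{perezgarcia2020}, which rests on Araki's expansional machinery \cite{araki1969}. So your sketch must be judged against that literature. Your skeleton is the right one and matches Araki's starting point: writing $H_{[1,n]}=H_{[2,n]}+h_{1,2}$, checking $U'(t)=-U(t)V(t)$ with $V(t)=e^{-tH_{[2,n]}}h_{1,2}e^{tH_{[2,n]}}$ (this computation is correct), expanding $E_n$ as a Dyson series anchored at the single boundary bond, and observing that $E_n^{-1}$ has the same structure with the Hamiltonians exchanged. You also correctly identify that the naive term-by-term bound fails because imaginary-time conjugation is non-unitary.

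The decisive step, however, is asserted with a wrong count, and this is a genuine gap rather than a bookkeeping detail. You bound ``the number of connected bond-chains of length $k$ through a fixed site'' by $C^k$ and pair this with the $\beta^k/k!$ from time-ordering. But the nested-commutator expansion is indexed by \emph{ordered sequences} $(b_1,\dots,b_k)$ of bonds, with repetitions allowed: $\mathrm{ad}_{h_{b_k}}\cdots\mathrm{ad}_{h_{b_1}}(h_{1,2})$ does not vanish when $b_j$ lies in the \emph{interior} of the support accumulated so far, and at step $j$ there are $O(j)$ admissible bonds. The number of admissible sequences therefore grows like $k!\,C^k$, the $1/k!$ is consumed, and your estimate degenerates to a geometric series $\sum_k(C\beta)^k$, convergent only below a threshold temperature --- precisely the naive bound that works in any dimension at high temperature. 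Indeed, as written your argument never uses one-dimensionality quantitatively (``the chain must contain $\sim n$ bonds to reach site $n$'' holds equally in 2D), so if the $C^k$ count were legitimate you would obtain uniform boundedness of expansionals at all temperatures in all dimensions, contradicting Bouch's counterexample showing that imaginary-time evolution on a 2D lattice is not entire. The genuinely one-dimensional mechanism exploited by Araki and \cite{perezgarcia2020} is that only the (at most two) bonds at the \emph{ends} of the current support can extend it; after resumming the interior multiplicities, a term whose support reaches distance $\ell$ carries a weight of the form $(C\beta)^m/m!$ with $m\ge\ell/2$ --- compare the functions $\Omega_k^*(x)\le\sum_{n\ge\lceil k/2\rceil}(6x\beta)^n/n!$ appearing in Appendix B of this paper --- and it is this surviving factorial that simultaneously yields the uniform bound $\cG$ and the superexponential decay $\delta(n)$. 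Your closing paragraph honestly flags this resummation as ``where the genuine work lies,'' but the quantitative bridge you propose would not get there.
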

The same techniques are crucial in proving a \emph{local indistinguishability} result, which we will also require in our analysis.
\begin{lemma}[{\cite[Lemma 4.15]{perezgarcia2020}}]\label{lem:localIndistinguishability}
There exist constants $\mathcal K, \alpha>0$ only depending on $\beta$ such that for any $n'\ge n$ and any observable $A$ supported on $[1,k]$, we have
\[
\left|\Tr[A\rho_{[1,k+n]}]-\Tr[A\rho_{[1,k+n']}]\right|\le \|A\|\mathcal Ke^{-\alpha n}.
\]
\end{lemma}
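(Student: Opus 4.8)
The plan is to reduce this two-parameter statement to a one-parameter single-site estimate and then to control that estimate by a connected correlation function built from the expansionals of Lemma~\ref{lem:arakiExpansionals}. First I would pass to a right-boundary version of the expansional: by translation and reflection invariance the operator $F_j:=e^{-\beta H_{[1,j]}}e^{\beta H_{[1,j-1]}}$, which removes the rightmost bond $h_{j-1,j}$, obeys the same conclusions as Lemma~\ref{lem:arakiExpansionals}, namely $\|F_j\|,\|F_j^{-1}\|\le\cG$ and $F_j$ is $\delta(R)$-close in operator norm to an operator supported on $[j-R,j]$, with $\delta$ superexponentially decaying. Writing $g(m):=\Tr[A\rho_{[1,m]}]$ and telescoping,
\[
\Tr[A\rho_{[1,k+n]}]-\Tr[A\rho_{[1,k+n']}]=\sum_{m=k+n}^{k+n'-1}\bigl(g(m)-g(m+1)\bigr),
\]
it then suffices to show $|g(m)-g(m+1)|\le\|A\|\,Ce^{-\alpha(m-k)}$ for constants $C,\alpha>0$ depending only on $\beta$. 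Summing the resulting geometric series yields the claim with a bound uniform in $n'$ (in particular as $n'\to\infty$), absorbing the factor $1/(1-e^{-\alpha})$ into $\mathcal K$.

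Second I would make the single-site increment explicit. From $e^{-\beta H_{[1,m+1]}}=F_{m+1}e^{-\beta H_{[1,m]}}$ and dividing by partition functions one gets $\Tr[A\rho_{[1,m+1]}]=\Tr[AF_{m+1}\rho_{[1,m]}]/\Tr[F_{m+1}\rho_{[1,m]}]$, while the definitions give directly $\Tr[F_{m+1}\rho_{[1,m]}]=Z_{m+1}/Z_m\ge 1/\cG$ (since $Z_m=Z_{m+1}\Tr[F_{m+1}^{-1}\rho_{[1,m+1]}]\le\cG Z_{m+1}$ by $\|F_{m+1}^{-1}\|\le\cG$). Hence
\[
g(m+1)-g(m)=\frac{\Tr[AF_{m+1}\rho_{[1,m]}]-\Tr[A\rho_{[1,m]}]\,\Tr[F_{m+1}\rho_{[1,m]}]}{\Tr[F_{m+1}\rho_{[1,m]}]},
\]
whose numerator is exactly the connected two-point function of $A$ and $F_{m+1}$ in the Gibbs state $\rho_{[1,m]}$ and whose denominator is bounded below by $1/\cG$. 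The problem is thus reduced to bounding $\bigl|\Tr[AF_{m+1}\rho_{[1,m]}]-\Tr[A\rho_{[1,m]}]\Tr[F_{m+1}\rho_{[1,m]}]\bigr|$ by $\|A\|$ times a quantity exponentially small in the separation $m-k$ between the support $[1,k]$ of $A$ and the essential support of $F_{m+1}$ near site $m+1$.

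Third I would localize and decouple. Replacing $F_{m+1}$ by its norm-$\delta(R)$ approximation $\hat F$ supported on $[m+1-R,m+1]$ costs at most $\cG\|A\|\delta(R)$ in the increment, since the partition-function ratios stay $O(1)$ by the uniform bounds of Lemma~\ref{lem:arakiExpansionals}; with $R$ a fixed fraction of $m-k$, the operators $A$ and $\hat F$ then act on regions separated by a gap of order $m-k$. It remains to show that the connected correlation of two uniformly bounded, far-separated operators in the finite one-dimensional Gibbs state $\rho_{[1,m]}$ decays exponentially in this gap. This is the main obstacle, and it is precisely the step at which the superexponential operator-locality of the expansionals necessarily degrades to a merely exponential rate: the decay here reflects the genuine, nonzero correlation length of the one-dimensional thermal state rather than an artefact of approximation. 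Concretely I would establish it via a bulk bond-removal expansional placed at a cut inside the gap, which quasi-factorizes $\rho_{[1,m]}$ across the cut up to a defect operator localized there; estimating the residual correlation mediated by this defect, and balancing its superexponential localization error against the exponential gain from the cut while optimizing $R$ as a fixed fraction of $m-k$, fixes the rate $\alpha$. Collecting the per-site bound and summing the geometric series then completes the proof, with $\mathcal K$ and $\alpha$ depending only on $\beta$.
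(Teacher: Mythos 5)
First, a point of orientation: the paper does not prove this statement at all --- it is imported verbatim as \cite[Lemma 4.15]{perezgarcia2020}, so there is no internal proof to compare against, and your proposal has to stand on its own. Its first two steps do stand: the telescoping over single-site increments; the mirrored expansional $F_{m+1}=e^{-\beta H_{[1,m+1]}}e^{\beta H_{[1,m]}}$ (legitimate, since the constants in Lemma~\ref{lem:arakiExpansionals} depend only on $\beta$ under $\|h\|\le1$ and hence survive reflection); the identity expressing $g(m+1)-g(m)$ as a connected correlation of $A$ and $F_{m+1}$ divided by $\Tr[F_{m+1}\rho_{[1,m]}]$; and the lower bound on that denominator via $\|F_{m+1}^{-1}\|\le\cG$ --- all correct up to harmless factors of $d$ from which trace you normalize on $[1,m+1]$. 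This reduction is a sensible and clean route.

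The gap is in your third step, and it is fatal as written. You reduce the lemma to exponential decay of connected correlations in the finite Gibbs state $\rho_{[1,m]}$ at arbitrary $\beta$, and propose to prove that by placing a single bond-removal expansional at a cut inside the gap. But the defect operator at the cut is only \emph{bounded}, with $\|\hat E\|,\|\hat E^{-1}\|\le\cG$; it is not close to the identity (it equals $\id$ only for commuting interactions), so after quasi-factorization the residual correlation mediated by the defect is the connected correlation of $A$ with a dressed $O(1)$-norm operator localized at the cut, taken in the Gibbs state of the left half --- a problem of exactly the same type at distance roughly $(m-k)/2$. The induced recursion has the shape $C(\ell)\le\delta(c\ell)+b\,C(\ell/2)$ with $b=O(\cG^2)\ge1$, which unrolls to $C(\ell)=O(\ell^{\log_2 b})$: no decay at all, let alone exponential, and no choice of your fraction $R$ fixes a rate $\alpha>0$. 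Decay of correlations for one-dimensional Gibbs states is not a corollary of expansional locality; it rests on Araki's analyticity machinery for the infinite chain \cite{araki1969}, and for \emph{finite} chains --- which is what your argument needs, uniformly in $m$ --- it was established only recently in \cite{bluhm2022}, a point the paper itself emphasizes when discussing the hypotheses of \cite{kuwahara2018}. If you replace your sketch by an invocation of \cite{bluhm2022}, your steps 1--2 do yield the lemma, but you would then be importing a theorem of at least the same depth as the statement being proved; as it stands, the proposal is circular at its decisive step.
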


We now proceed to the proof of Lemma~\ref{lem:partFuncRatio}, a quantitative version of \cite[Lemma 4.1]{fawzi2022} in an analogous fashion.
\begin{proof}[Proof of Lemma~\ref{lem:partFuncRatio}]
We decompose the ratio of partition functions using Araki expansionals
\begin{align*}
\frac{Z_{m+1}}{Z_m}=\frac{\Tr[e^{-\beta H_{[1,m+1]}}]}{\Tr[e^{-\beta H_{[1,m]}}]}=\frac{\Tr[E_{m+1}e^{-\beta H_{[2,m+1]}}]}{\Tr[e^{-\beta H_{[1,m]}}]}=\frac{\Tr[\tr_1[E_{m+1}]e^{-\beta H_{[1,m]}}]}{\Tr[e^{-\beta H_{[1,m]}}]}\\
\end{align*}
Subtracting the same decomposition for $Z_{l+1}/Z_l$ choosing $m\ge l$, we find
\begin{equation}\label{eq:cauchyCrit}
\begin{aligned}
\left|\frac{Z_{m+1}}{Z_m}-\frac{Z_{l+1}}{Z_l}\right|&\le\left|\Tr[\tr_1[E_{l/2}-E_{l+1}]\frac{e^{-\beta H_{[1,l]}}}{\Tr[e^{-\beta H_{[1,l]}}]}]\right|\\
&\quad+\left|\Tr[\tr_1[E_{l/2}]\left(\frac{e^{-\beta H_{[1,m]}}}{\Tr[e^{-\beta H_{[1,m]}}]}-\frac{e^{-\beta H_{[1,l]}}}{\Tr[e^{-\beta H_{[1,l]}}]}\right)]\right|\\
&\quad+\left|\Tr[\tr_1[E_{m+1}-E_{l/2}]\frac{e^{-\beta H_{[1,m]}}}{\Tr[e^{-\beta H_{[1,m]}}]}]\right|\\
&\le \delta(l/2+1)+\cG \mathcal Ke^{-\alpha l}+\delta(l/2+1)\\
&\le A e^{-l/\xi},
\end{aligned}
\end{equation}
where we used Lemma~\ref{lem:arakiExpansionals} and Lemma~\ref{lem:localIndistinguishability}. The last line follows from upper bounding the superexponential decay by an exponentially decaying function and combining all constants.
Note that
\[
f(h)=-\lim_{m\to\infty} \frac1{\beta m} \log(Z_m)=-\lim_{m\to\infty}\frac1{\beta m}\sum_{i=1}^m\log(\frac{Z_i}{Z_{i-1}}),
\]
where $Z_0=1$, which implies
\[
f(h)=-\frac1\beta\log(\lim_{m\to\infty}\frac{Z_{m+1}}{Z_m})
\]
if the limit exists and is positive.
The existence of the limit follows from its Cauchy convergence that holds due to Equation~\eqref{eq:cauchyCrit} and its positivity  follows from \cite[Lemma 3.5]{lenci2005}
\[
\left|\log(\frac{Z_{m+1}}{Z_m})\right|\le\beta\|h\|.
\]
 Taking the limit of \ref{eq:cauchyCrit} concludes the proof.
\end{proof}
\section{MPO algorithm}
With this expression at hand we can proceed to the proof of our main result.
Deviating from the approach in \cite{kuwahara2018} where the analogue expression from the previous lemma is simply evaluated using exact diagonalization, we use the following algorithm from \cite{kuwahara2021} to construct a matrix product operator approximation of the unnormalized Gibbs states and compute their traces.
\begin{theorem}[{\cite[Theorem 3]{kuwahara2021}}]\label{thm:MPO}
There exists an algorithm taking as input the description of an input hermitian matrix $h\in\CC^{d^2\times d^2}$ with $\|h\|\le1$ and an inverse temperature $\beta$ and outputting a matrix product operator $M$ on $n$ sites such that
\[
\left\|M-e^{-\beta H_{[1,n]}}\right\|_1\le\eps\left\|e^{-\beta H_{[1,n]}}\right\|_1.
\]
The bond dimension is given by
\[
D\le \exp(C\sqrt{\beta\log(n/\eps)}\log(\beta\log(n/\eps))),
\]
and the algorithm runs in time
\[
t\le n\beta\exp(C\sqrt{\beta\log(n/\eps)}\log(\beta\log(n/\eps))),
\]
where $C$ is a numerical constant.
\end{theorem}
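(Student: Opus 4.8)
The plan is to build the MPO $M$ \emph{explicitly}, as an operator with manifest matrix-product structure whose bond dimension is bounded by construction, rather than by forming $e^{-\beta H_{[1,n]}}$ and truncating its Schmidt spectrum across each cut; the latter would control only the $2$-norm and would leave the trace-norm error — the quantity the theorem asks to bound — uncontrolled. The starting point is the imaginary-time locality encoded in the Araki expansionals of Lemma~\ref{lem:arakiExpansionals}. Cutting the chain at a site $m$ as $H = H_{[1,m]} + h_{m,m+1} + H_{[m+1,n]}$ and using the peeling identity $e^{-\beta H_{[1,n]}} = E_n\,e^{-\beta H_{[2,n]}}$ (and its symmetric Duhamel variants) expresses $e^{-\beta H_{[1,n]}}$, up to a superexponentially small operator-norm error $\delta(\ell)$, as a product of quasi-local factors each supported on a window of $O(\ell)$ sites around a cut, with norms and inverse norms uniformly bounded in $n$ by $\mathcal G$. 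First I would arrange these factors into a constant number of layers (a brickwork over blocks of size $\ell$), so that only $O(1)$ factors straddle any given cut; this is what prevents the per-cut bond dimensions from compounding over the $\Theta(n/\ell)$ blocks.

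Second, I would bound the bond dimension carried by a single factor. Each factor is the exponential of a window Hamiltonian dressed by its boundary, and the exponential is the canonical setting in which the \emph{optimal} polynomial approximation has degree scaling like the square root of the spectral range: a Chebyshev polynomial of degree $k$ approximates $e^{-z}$ to error $\eps'$ once $k = O(\sqrt{r\log(1/\eps')})$, where $r$ is the effective spectral range. Truncating to this degree turns each factor into a degree-$k$ polynomial in the $O(\ell)$ local terms of its window, which admits an \emph{exact} MPO whose bond dimension is bounded by the number of degree-$\le k$ multi-indices over $O(\ell)$ positions crossing the cut, giving $D \le \exp(O(k\log\ell))$. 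Because the factor is assembled from a prescribed polynomial and not from an SVD, its error enters the global bound additively and is controlled in operator norm.

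The third step is the error budget and the parameter optimization. Two errors arise: the locality truncation of each factor, contributing $O((n/\ell)\,\delta(\ell))$ after all blocks, and the Chebyshev truncation, contributing the polynomial error per factor. To convert these operator-norm factor errors into the required \emph{relative} trace-norm bound $\|M - e^{-\beta H_{[1,n]}}\|_1 \le \eps\|e^{-\beta H_{[1,n]}}\|_1$, I would telescope over the factors, bounding each replacement by its operator-norm error times the trace norm of the surrounding exact Gibbs operator and normalizing by $Z_n = \|e^{-\beta H_{[1,n]}}\|_1$; the uniform bounds on $E_n, E_n^{-1}$ from Lemma~\ref{lem:arakiExpansionals}, together with the stability of neighbouring partition functions from Lemma~\ref{lem:localIndistinguishability}, keep this telescoping sum additive and uniform in $n$. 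Demanding total error $\le\eps$ forces $\ell = O(\beta\log(n/\eps))$ through the decay of $\delta$, and then the Chebyshev estimate gives $k = O(\sqrt{\beta\log(n/\eps)})$; substituting into $D \le \exp(O(k\log\ell))$ yields $D \le \exp(C\sqrt{\beta\log(n/\eps)}\log(\beta\log(n/\eps)))$. The runtime follows since the MPO has $O(n)$ tensors, each built and contracted by local linear algebra on $D\times D$ matrices, with an extra factor $\beta$ from forming the local polynomials, giving $t \le n\beta\exp(C\sqrt{\beta\log(n/\eps)}\log(\beta\log(n/\eps)))$.

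The hard part is exactly the combination the theorem highlights: achieving a controlled \emph{trace-norm} error while keeping the bond dimension subpolynomial and using \emph{no} Schmidt truncation. Operator-norm closeness of the quasi-local factors does not by itself bound the trace-norm error of their product, and for the unnormalized positive operator $e^{-\beta H_{[1,n]}}$ the bound must be relative to $Z_n$; the telescoping must therefore track how an operator-norm perturbation of one localized factor is amplified by the trace norm of the remaining Gibbs operator, uniformly in $n$, which is where the boundedness of the Araki expansionals and their inverses is indispensable. A second delicate point is ensuring the Chebyshev approximation acts on the correct effective spectral range — set by the dressed boundary rather than by $\|H_{[1,n]}\|\sim n$ — since it is this that keeps the degree at $\sqrt{\beta\log(n/\eps)}$ and hence the bond dimension subpolynomial rather than polynomial in $n$.
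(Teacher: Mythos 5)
First, a point of scope: the paper does not prove Theorem~\ref{thm:MPO} at all --- the statement is imported verbatim from \cite[Theorem 3]{kuwahara2021} and used as a black-box subroutine --- so your attempt has to be measured against the proof in that cited work. Your high-level architecture does overlap with it in spirit: you correctly identify that Schmidt/SVD truncation must be avoided because it controls only the $2$-norm, you work with quasi-local boundary factors of Araki type, and you end up with exactly the two parameters (a length scale and a polynomial degree) that the paper itself says are hidden inside the MPO subroutine.

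The genuine gap is in your second step, and it is fatal to the stated bounds. You claim a Chebyshev degree $k=O(\sqrt{r\log(1/\eps')})$ with an effective spectral range $r$ ``set by the dressed boundary'', hence $r=O(\beta)$ independent of the window. But the square-root regime of the Chebyshev bound for $e^{-z}$ is only available when the range satisfies $r\gtrsim\log(1/\eps')$: on an interval of \emph{fixed} length the best uniform degree-$k$ approximation error of the exponential decays like $c^k/k!$ and no faster, so a degree $k=\Theta\left(\log(1/\eps')/\log\log(1/\eps')\right)$ is necessary no matter how small $r>0$ is. In the regime relevant to this paper --- $\beta$ fixed, $\eps\to0$, per-factor error $\eps'\approx\eps\ell/n$ --- this necessary degree dominates $\sqrt{\beta\log(n/\eps)}$. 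Feeding the true degree into your own count $D\le\exp(O(k\log\ell))$ with $\ell=O(\beta\log(n/\eps))$ gives $\log D=\Theta(\log(n/\eps))$, i.e.\ $D=\poly(n/\eps)$: a \emph{polynomial} bond dimension, hence a runtime polynomial in $1/\eps$. (Taking instead $r=\beta\ell$ for the whole window only makes $k$ larger.) This loses precisely the subpolynomial scaling that the theorem asserts and that Theorem~\ref{thm:main} needs in order to improve on \cite{fawzi2022}. In \cite{kuwahara2021} the square root is not sourced from a textbook Chebyshev bound on a boundary-dressed window; it comes from their key technical lemma --- the same refined approximation scheme behind their improved thermal area law --- in which the quantity controlling the operator Schmidt rank across a cut is not the scalar polynomial degree but the expansion order in the boundary interaction, and the degree-versus-crossing-number tradeoff is optimized jointly. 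Your sketch contains no mechanism producing an error decaying like $e^{-\Omega(m^2/\beta)}$ in the rank-controlling parameter $m$, which is what the quoted bond dimension $\exp(C\sqrt{\beta\log(n/\eps)}\log(\beta\log(n/\eps)))$ encodes.

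A secondary, repairable issue: converting operator-norm errors on individual factors into the required \emph{relative} trace-norm bound is not automatically ``additive and uniform in $n$'' as you assert. Each factor replacement is amplified by ratios of partition functions with and without the cut interactions, which Lemma~\ref{lem:arakiExpansionals} controls only by a factor $\cG$ \emph{per cut}, i.e.\ by $\cG^{\Theta(n/\ell)}$ overall; this exponential-in-$n/\ell$ amplification must be absorbed explicitly into the superexponential decay of $\delta(\ell)$ when you fix $\ell$. Lemma~\ref{lem:localIndistinguishability}, which compares reduced Gibbs states of chains of different lengths, is not the relevant tool for that step.
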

\begin{remark}
We note that the cited work contains a case distinction for small and large $\eps$ compared to $\beta$.
Since we always consider $\beta$ as a parameter and only write the dependence on $\eps$, we choose the case of small $\eps$.
This threshold can be absorbed into the constant $B$ in Theorem~\ref{thm:main}.

The Theorem comes with an explicit temperature dependence, which we included in the above formulation.
This dependence will not carry over to our result since the bottleneck is given by the approximation step using Araki expansionals.
\end{remark}
Combining Lemma~\ref{lem:partFuncRatio} with this MPO algorithm we prove our main result.
\begin{proof}[Proof of Theorem~\ref{thm:main}]
The algorithm is given in Algorithm \ref{alg:freeEnergy}.
\begin{algorithm}\label{alg:freeEnergy}
\SetKwInput{KwInput}{Input}
\SetKwInput{KwRequire}{Require}
\SetKwInput{KwOutput}{Output}
\SetKwInput{KwData}{Parameter}
\KwData{inverse temperature $\beta$}
\KwRequire{subroutine ${MPO}_\textrm{Gibbs}$ constructing an MPO approximation of the unnormalized thermal state from Theorem~\ref{thm:MPO}, polynomial time routine to compute the trace of a given MPO $\Tr_{MPO}$ (standard contraction), explicit constants $D,E$ (see below)}
\KwInput{local dimension $d$, Hamiltonian term $h\in\mathbb C^{d^2 \times d^2}$ such that $\|h\|\le1$, error $\eps$}
\KwOutput{$\tilde{f}$ approximation to the free energy $f(h)$}
$l\gets C\log(1/\eps)$\;
$\tilde Z_l\gets \Tr_{MPO}[{MPO}_\textrm{Gibbs}(l,h,\eps\beta\exp(-\beta(2l+1))/12)]$\;
$\tilde Z_{l+1}\gets \Tr_{MPO}[{MPO}_\textrm{Gibbs}(l+1,h,\eps\beta \exp(-\beta(2l+1))/8)]$\;
$\tilde{f}_{\beta} \gets -(1/\beta)\log(\tilde Z_{l+1}/\tilde Z_l)$\;
\caption{Algorithm for computing the free energy density.}
\end{algorithm}

Since all expressions here are ratios of traces, we can switch to normalized traces $\tr[\cdot]=\Tr[\cdot]/d^l$.
Recall, that we denote the partition functions for the system of size $l$ by  $Z_l$.
We further denote by $\tilde Z_l$, $\tilde Z_{l+1}$ the corresponding approximation from the MPO algorithm, i.e., we run the the algorithm from Theorem~\ref{thm:MPO} at an error level $\eps_{MPO}$ to be chosen later and contract the resulting tensor network to obtain a trace estimate. The latter step can be done in time polynomial in the bond dimension, physical dimension and $l$.

Our task is to provide explicit bounds on the parameters $l$ and $\eps_{MPO}$ to obtain the desired accuracy.

The approximation error can be split up into a term caused by the truncation in $l$ and one term due to the approximation errors from the MPO
\begin{equation}\label{eq:errorSplit}
\left|f(h)-\tilde f\right|\le \left|f(h)-\left(-\frac{1}{\beta}\right)\log(\frac{Z_{l+1}}{Z_l})\right|+\left|-\frac{1}{\beta}\log(\frac{Z_{l+1}}{Z_l})-\left(-\frac{1}{\beta}\right)\log(\frac{\tilde Z_{l+1}}{\tilde Z_l})\right|
\end{equation}

In order to control the error propagation for the logarithm and fractions, we need analytic bounds on the partition functions and their ratios.
The absolute values of the traces can be bounded by
\[
e^{-\beta l}\le Z_l\le e^{\beta l}.
\]
Furthermore, from \cite[Lemma 3.6]{lenci2005} we have
\[
\left|\log(\frac{Z_{l+1}}{Z_l})\right|\le\beta.
\]

For the first term in Equation~\eqref{eq:errorSplit} by the mean value theorem we estimate
\[
\left|-\frac1\beta\log(e^{-\beta f})-\left(-\frac{1}{\beta}\right)\log(\frac{Z_{l+1}}{Z_l})\right|\le
\frac1\beta e^{\beta } \left|\frac{Z_{l+1}}{Z_l}-e^{-\beta f}\right|
\]
using the derivative of the logarithm and a lower bound of $e^{-\beta}$ on its argument for all points in the intervall spanned by $e^{-\beta f}$ and $Z_{l+1}/Z_l$.
Invoking Lemma~\ref{lem:partFuncRatio}, we can bound the first term by $\eps/2$ by choosing
\[
l\ge \log(\frac{2e^{\beta}A}{\beta\eps})\xi.
\]

For the second term in Equation~\eqref{eq:errorSplit} we proceed similarly.
To lower bound the derivative of the logarithm we require the error bound
\begin{equation}\label{eq:weakRatioBound}
\left|\frac{Z_{l+1}}{Z_l}-\frac{\tilde Z_{l+1}}{\tilde Z_l}\right|\le \frac12e^{-\beta}
\end{equation}
which will follow from the input accuracy to the MPO subroutine.
Analogously to before, this upper bounds the derivative of the logarithm by $2e^{\beta}$ in the interval with end points $Z_{l+1}/Z_l$, $\tilde Z_{l+1}/\tilde Z_l$.
Under this condition a bound on the second term in Equation~\eqref{eq:errorSplit} reads
\[
\left|-\frac{1}{\beta}\log(\frac{Z_{l+1}}{Z_l})-\left(-\frac{1}{\beta}\right)\log(\frac{\tilde Z_{l+1}}{\tilde Z_l})\right|\le\frac1\beta2e^{\beta}\left|\frac{Z_{l+1}}{Z_l}-\frac{\tilde Z_{l+1}}{\tilde Z_l}\right|\le\frac\eps2,
\]
i.e., we need an error of
\begin{equation}\label{eq:strongRatioBound}
\left|\frac{Z_{l+1}}{Z_l}-\frac{\tilde Z_{l+1}}{\tilde Z_l}\right|\le\frac{\eps\beta e^{-\beta}}4
\end{equation}
for the approximate ratio of partition functions. Note that this is a stronger requirement than Equation~\eqref{eq:weakRatioBound} if $\eps<\beta/2$ so, since we consider only the asymptotics of small $\eps$ for fixed $\beta$, we consider the latter condition included.

Using the inequality
\[
\left|\frac{Z_{l+1}}{Z_l}-\frac{\tilde Z_{l+1}}{\tilde Z_l}\right|\le\frac1{Z_l}\left|Z_{l+1}-\tilde Z_{l+1}\right|+\frac{\tilde Z_{l+1}}{Z_l\tilde Z_l}\left|\tilde Z_l- Z_l\right|,
\]
and recalling
\begin{align*}
Z_l^{-1}&\le e^{\beta l}\\
\frac{\tilde Z_{l+1}}{\tilde Z_l}&\le \frac{Z_{l+1}}{Z_l}-\left|\frac{Z_{l+1}}{Z_l}-\frac{\tilde Z_{l+1}}{\tilde Z_l}\right|\le\frac32 e^{\beta}
\end{align*}
we see that by running the algorithm in Theorem~\ref{thm:MPO} at error levels
\begin{align*}
\left|Z_l-\tilde Z_l\right|&\le \eps_{MPO,l}e^{\beta(l-1)}:=\frac{\eps\beta}{12} e^{-\beta(l+2)}\\
\left|Z_{l+1}-\tilde Z_{l+1}\right|&\le\eps_{MPO,l+1}e^{\beta l}:= \frac{\eps\beta}8 e^{-\beta(l+1)}
\end{align*}
equation~\eqref{eq:strongRatioBound} follows. This proves correctness of the algorithm.
Note that that the inverse input errors for the MPO algorithm are polynomial in $\eps^{-1}$ as $l$ has only logarithmic dependence.

Let us now turn our attention to the runtime of the algorithm.
The relevant part here is the runtime of the algorithm constructing the MPO descriptions of the truncated thermal states and its resulting bond dimension, which polynomially enters the runtime of the contraction.
The remaining steps of the algorithm are evaluations of elementary functions and their contribution to the runtime can be neglected.

From Theorem~\ref{thm:MPO} runtime and bond dimension are given by
\[
t\le l\beta\exp(C \sqrt{\beta\log(l/\eps_{MPO})}\log(\beta\log(l/\eps_{MPO})))
\]
and
\[
D=\exp(C \sqrt{\beta\log(l/\eps_{MPO})}\log(\beta\log(l/\eps_{MPO})))
\]
$\eps_{MPO}$ is the input error for the MPO construction algorithm.
The runtime of the contraction is $\poly(l,D, d)$.
The temperature dependence of the runtimes can be simplified by combining all constants into constants $\tilde A,\tilde B$ and we obtain an overall bound on the runtime
\[
t_\textrm{total}\le \exp(\sqrt{\tilde A\log(\frac{\tilde B}\eps)}\log(\tilde A\log(\frac{\tilde B}\eps))),
\]
which closes the proof.

\end{proof}
\paragraph{Acknowledgements}
I would like to thank {\'{A}}lvaro Alhambra and Hamza Fawzi for helpful discussions. I acknowledge support from the UK Engineering and Physical Sciences Research Council (EPSRC) under grant number EP/W524141/1.

\clearpage
\printbibliography
\appendix
\section{Quantum Belief Propagation}
In this Appendix we give an alternative proof of Lemma~\ref{lem:partFuncRatio} based on \cite{kuwahara2018}.
We deviate in a few technical details, in particular, we do not need a result on the decay of correlations.
While the decay of correlations has been proven for translation-invariant infinite~\cite{araki1969}, and translation-invariant finite~\cite{bluhm2022} Gibbs states in one dimension, the requirement in \cite{kuwahara2018} is for a finite but perturbed Gibbs state.
We believe that this gap could be filled, however, our approach circumvents this issue by making use of a known local indistinguishability result instead.

The significantly simpler proof in the main text is restricted to the translation-invariant one-dimensional setting.
The Quantum Belief Propagation (QBP) used in this chapter is a general perturbation formula that applies to arbitrary quantum systems satisfying Lieb-Robinson bounds.
However, to conclude a result equivalent to Lemma~\ref{lem:partFuncRatio}, we require the additional property of local indistinguishability that is satisfied, i.e., for systems at high temperature, but usually also comes with alternative algorithmic approaches such as cluster expansion \cite{Mann2021}.

This section is structured as follows.
We start by giving the exact QBP formula for a perturbation equivalent to adding sites in the partition function.
The formula gives a perturbation operator that relates a perturbed unnormalized Gibbs state to the original one.
The second step is to prove locality properties of these perturbation operators.
Here we slightly deviate from previous formulations and give an approximation in terms of truncated Hamiltonians rather than partial traces of the perturbation operators.
We believe this formulation could also be convenient for further work using the QBP.
We conclude by combining this formula with the local indistinguishability in Lemma~\ref{lem:localIndistinguishability}.

\subsection{Exact Quantum Belief Propagation}\label{sec:QBP}
The Quantum Belief Propagation is originally due to \cite{hastings2007}.
It gives an analytic expression for a perturbed (unnormalized) Gibbs state.
We will use this result to express the ratio of partition functions with an additional site as of Lemma~\ref{lem:partFuncRatio}.

We make use of the time evolution operator for a given Hamiltonian
\[
\Gamma^t_{H_0}(A)=e^{itH_0}Ae^{-itH_0}
\]
and start by giving the main definitions necessary for the QBP.
\begin{definition}[Exact Quantum Belief Propagation]
The following equations define the Quantum Belief Propagation Operators for any perturbed Hamiltonian $H(s)=H_0+sV$, $0\le s\le1$.
\begin{align*}
\Phi_\beta^{H(s)}(V)&=\int_{-\infty}^\infty f_\beta(t) e^{iH(s)t}Ve^{-iH(s)t}dt\\
\tilde f_\beta(\omega)&=\begin{cases}
\frac{\tanh(\beta\omega/2)}{\beta\omega/2}&\textrm{ for }\omega\neq0\\
1&\textrm{ for }\omega=0
\end{cases}
\end{align*}
The inverse Fourier transform is given by \cite[Appendix B]{anshu2021}
\[
f_\beta(t)=\frac1{2\pi}\int_{-\infty}^\infty e^{i\omega t}\tilde f_\beta(\omega)dt=\frac{2}{\beta\pi}\log\frac{e^{\pi|t|/\beta}+1}{e^{\pi|t|/\beta}-1}.
\]
Now this defines the QBP operators as follows
\begin{align*}
\eta&=\eta(1)\\
\eta(s)&=\mathcal{T}\exp(-\frac\beta2\int_0^s \Phi^{H(s)}_\beta(V)ds)\\
&=\id+\sum_{k\ge1}\left(-\frac{\beta}2\right)^k\int_0^s\int_0^{s_1}\cdots\int_0^{s_{k-1}}\Phi^{H(s_1)}_\beta(V)\ldots\Phi^{H(s_k)}_\beta(V)ds_1\ldots ds_k.
\end{align*}
\end{definition}
Note that $\|f_\beta\|_{L^1}=1$ and therefore a straightforward application of the triangle inequality yields
\[
\left\|\Phi^{H(s)}_\beta(V)\right\|\le\|V\|.
\]

In the following, we will specialize our statements to the choice of Hamiltonian and perturbation relevant for us.
To reduce the amount indices we fix a length scale $L$ and introduce
\begin{align*}
H(s)&=H_{[1,L]}+sh_{L,L+1}\\
H_l(s)&=H_{[L-l,L]}+sh_{L,L+1}.
\end{align*}
By fixing the energy units we keep the assumption $\|h\|\le1$. For the sake of brevity we refrain from giving the explicit dependence of constants on $\beta$ in the following.
We start from the following formulation of Quantum Belief Propagation for $H_0=H(0)$ and $V=h_{L,L+1}$.
Note that in the following, all traces are understood to act on the system $[1,L+1]$, which is compensated for by a factor $d$.
\begin{lemma}[{\cite[Proposition 6]{capel2023}}]\label{lem:QBP_F}
We have
\[
\frac{Z_{L+1}}{Z_L}=d\frac{\Tr[\exp(-\beta H(1))]}{\Tr[\exp(-\beta H(0))]}=d\Tr\left[\eta\frac{\exp(-\beta H(0))}{\Tr[\exp(-\beta H(0))]}\eta^\dagger\right]
\]
\end{lemma}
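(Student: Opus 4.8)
The plan is to split the claimed chain of equalities into two independent parts: the first equality is a bookkeeping identity recording which Hilbert space each trace acts on, while the second is the genuine Quantum Belief Propagation statement, which I would establish as the operator identity $e^{-\beta H(1)}=\eta\,e^{-\beta H(0)}\,\eta^\dagger$ and then trace out. For the first equality, observe that $H(1)=H_{[1,L+1]}$, so the numerator $\Tr[e^{-\beta H(1)}]$ over $[1,L+1]$ equals $Z_{L+1}$. Since $H(0)=H_{[1,L]}$ acts trivially on site $L+1$, we have $e^{-\beta H(0)}=e^{-\beta H_{[1,L]}}\otimes\id_{L+1}$, and tracing over $[1,L+1]$ gives $\Tr[e^{-\beta H(0)}]=d\,Z_L$. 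Substituting yields $d\,\Tr[e^{-\beta H(1)}]/\Tr[e^{-\beta H(0)}]=Z_{L+1}/Z_L$, which is exactly the compensating factor $d$ anticipated in the surrounding text.

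The heart of the second equality is the differential form of QBP,
\[
\partial_s e^{-\beta H(s)}=-\frac\beta2\left(\Phi_\beta^{H(s)}(V)\,e^{-\beta H(s)}+e^{-\beta H(s)}\,\Phi_\beta^{H(s)}(V)\right),
\]
with $V=h_{L,L+1}$. I would prove this by Duhamel's formula, $\partial_s e^{-\beta H(s)}=-\int_0^\beta e^{-\tau H(s)}V e^{-(\beta-\tau)H(s)}\,d\tau$, and then pass to the eigenbasis of $H(s)$. Writing $\omega=E_a-E_b$ for a pair of eigenvalues, the Duhamel integrand contributes the matrix element $-V_{ab}\,(e^{-\beta E_b}-e^{-\beta E_a})/\omega$, whereas the filter produces $(\Phi_\beta^{H(s)}(V))_{ab}=\tilde f_\beta(\omega)\,V_{ab}$, since the Fourier weight $f_\beta$ picks out the Bohr frequency $\omega$ from $e^{iH(s)t}Ve^{-iH(s)t}$ (the even parity of $\tilde f_\beta$ absorbs any sign ambiguity in the transform convention). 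Matching the two sides of the anticommutator expression then reduces to the scalar identity $\tanh(\beta\omega/2)=(1-e^{-\beta\omega})/(1+e^{-\beta\omega})$, which is precisely the defining property of $\tilde f_\beta$; the degenerate case $\omega=0$ is covered by the convention $\tilde f_\beta(0)=1$, and the $L^1$-bound $\|f_\beta\|_{L^1}=1$ guarantees convergence of the integral defining $\Phi$.

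With the differential identity in hand, I would set $G(s):=\eta(s)\,e^{-\beta H(0)}\,\eta(s)^\dagger$ and verify it solves the same linear ODE. Differentiating the time-ordered series term by term gives $\partial_s\eta(s)=-\frac\beta2\Phi_\beta^{H(s)}(V)\,\eta(s)$, and since $V$ is Hermitian and $f_\beta$ is real, $\Phi_\beta^{H(s)}(V)$ is Hermitian, so $\partial_s\eta(s)^\dagger=-\frac\beta2\,\eta(s)^\dagger\,\Phi_\beta^{H(s)}(V)$. The product rule then reproduces exactly the anticommutator ODE above for $G(s)$, with initial condition $G(0)=e^{-\beta H(0)}$ because $\eta(0)=\id$. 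By uniqueness of solutions of this linear ODE, $G(s)=e^{-\beta H(s)}$ for all $s\in[0,1]$; evaluating at $s=1$ gives $e^{-\beta H(1)}=\eta\,e^{-\beta H(0)}\,\eta^\dagger$. Taking the trace over $[1,L+1]$ and dividing by $\Tr[e^{-\beta H(0)}]$ delivers the second equality, completing the proof.

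The step I expect to be the main obstacle is the eigenbasis matching in the differential identity: aligning the Fourier-transform convention, the evenness of $\tilde f_\beta$, and the anticommutator structure so that everything collapses to the single $\tanh$ identity, while keeping track of the domain of convergence for $\Phi$. Once that identity is secured, the ODE uniqueness argument and the trace bookkeeping are routine.
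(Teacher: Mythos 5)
Your proof is correct, but note that the paper does not prove this lemma at all: it is imported verbatim as Proposition~6 of the cited reference \cite{capel2023}, and the surrounding text only fixes conventions (all traces taken on $[1,L+1]$, compensated by the factor $d$). What you have written is essentially the standard derivation of exact quantum belief propagation from that literature, reconstructed correctly: the bookkeeping identity $\Tr[e^{-\beta H(0)}]=d\,Z_L$ (since $H(0)$ acts trivially on site $L+1$) gives the first equality; for the second, your Duhamel/eigenbasis computation establishing $\partial_s e^{-\beta H(s)}=-\tfrac\beta2\bigl(\Phi_\beta^{H(s)}(V)e^{-\beta H(s)}+e^{-\beta H(s)}\Phi_\beta^{H(s)}(V)\bigr)$ checks out --- the matrix-element matching does collapse to $\tanh(\beta\omega/2)=(1-e^{-\beta\omega})/(1+e^{-\beta\omega})$, with $\omega=0$ handled by $\tilde f_\beta(0)=1$ and the sign ambiguity in the Fourier convention killed by evenness of $\tilde f_\beta$ --- and the remaining steps (Hermiticity of $\Phi_\beta^{H(s)}(V)$ from Hermiticity of the integrand and realness of $f_\beta$, the product-rule computation for $G(s)=\eta(s)e^{-\beta H(0)}\eta(s)^\dagger$, and uniqueness for the finite-dimensional non-autonomous linear ODE) are all sound, yielding the operator identity $e^{-\beta H(1)}=\eta\,e^{-\beta H(0)}\,\eta^\dagger$, which is strictly stronger than the traced statement. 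So your proposal buys self-containedness where the paper chose citation; the trade-off is that you reprove a known result, which is exactly why the paper outsources it.
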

Our aim in the following subsections is to quantify the locality of the QBP operators in order to approximate this expression by an expectation value of a local observable in the Gibbs state of a smaller system.
\subsection{Locality of the QBP Operators}
The first step is to approximate the operators $\Phi_\beta^{H(s)}$ using time evolution operators of truncated Hamiltonians.
Although the proof is closely related to the known approach \cite{capel2023}, the following Lemma provides a slightly different formulation.
Rather than projecting $\eta$ to a local operator, we redefine it using an approximation of the time evolution.
\begin{lemma}\label{lem:QBPLocality}
There are some constants $A,B$ depending only on $\beta$ such that for $0<\eps<1$ and for
\[
l\ge A \log(\frac1\eps)+B
\]
we have
\[
\left\| \Phi_\beta^{H(s)}(h_{L,L+1})-\Phi_\beta^{H_l(s)}(h_{L,L+1}) \right\|\le \eps.
\]
\end{lemma}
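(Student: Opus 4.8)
The plan is to work directly from the integral definition of the belief-propagation operator and write the difference as
\[
\Phi_\beta^{H(s)}(h_{L,L+1})-\Phi_\beta^{H_l(s)}(h_{L,L+1})=\int_{-\infty}^\infty f_\beta(t)\left(\Gamma^t_{H(s)}(h_{L,L+1})-\Gamma^t_{H_l(s)}(h_{L,L+1})\right)dt,
\]
estimating the integrand separately on a bulk region $|t|\le T$ and a tail region $|t|>T$, and optimizing the cutoff $T$ at the end. The two structural inputs are that $f_\beta$ is normalized, $\|f_\beta\|_{L^1}=1$, and that it decays exponentially, $f_\beta(t)=O(e^{-\pi|t|/\beta})$, so that $\int_{|t|>T}f_\beta(t)\,dt=O(e^{-\pi T/\beta})$. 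Since conjugation by a unitary preserves the operator norm, each term $\Gamma^t_{\,\cdot\,}(h_{L,L+1})$ has norm $\|h\|\le1$, and the tail contribution is therefore bounded by $2\int_{|t|>T}f_\beta(t)\,dt=O(e^{-\pi T/\beta})$, uniformly in $s$.

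For the bulk I would control the difference of the two Heisenberg evolutions by a Lieb--Robinson argument. Writing $W:=H(s)-H_l(s)=H_{[1,L-l]}$, a standard Duhamel interpolation between the two dynamics gives
\[
\Gamma^t_{H(s)}(h_{L,L+1})-\Gamma^t_{H_l(s)}(h_{L,L+1})=i\int_0^t e^{iH_l(s)(t-u)}\,[\,W,\Gamma^u_{H(s)}(h_{L,L+1})\,]\,e^{-iH_l(s)(t-u)}\,du,
\]
whose norm is at most $\int_0^{|t|}\|[W,\Gamma^u_{H(s)}(h_{L,L+1})]\|\,du$. The key point is that $W=\sum_{i=1}^{L-l-1}h_{i,i+1}$ is supported on $[1,L-l]$, at distance $l$ from $\supp(h_{L,L+1})=\{L,L+1\}$. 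Applying the finite-range one-dimensional Lieb--Robinson bound term by term, $\|[h_{i,i+1},\Gamma^u_{H(s)}(h_{L,L+1})]\|\le C e^{-\mu(d_i-v|u|)}$ with $d_i$ the distance of the $i$-th term to $\{L,L+1\}$, and summing the resulting geometric series (dominated by its nearest term at distance $l$) yields $\|[W,\Gamma^u_{H(s)}(h_{L,L+1})]\|\le C'e^{\mu v|u|}e^{-\mu l}$, with constants uniform in $s\in[0,1]$ since $\|sh_{L,L+1}\|\le1$. Integrating over $|u|\le|t|\le T$ and against $f_\beta$ then bounds the bulk contribution by $C''e^{\mu v T}e^{-\mu l}$.

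Combining the two pieces gives $\|\Phi_\beta^{H(s)}(h_{L,L+1})-\Phi_\beta^{H_l(s)}(h_{L,L+1})\|\le C''e^{\mu v T}e^{-\mu l}+O(e^{-\pi T/\beta})$, and I would finish by choosing $T=\tfrac{\beta}{\pi}\log(1/\eps)$ up to constants, so that the tail is at most $\eps/2$; the bulk term is then $\le\eps/2$ once $\mu l\ge\log(1/\eps)+\mu v T+O(1)$, i.e.\ precisely when $l\ge A\log(1/\eps)+B$ for constants $A,B$ depending only on $\beta$ (absorbing $\mu$, $v$, and the numerical prefactors). The main obstacle is the tension in the bulk estimate between the linear light-cone growth $e^{\mu v T}$ forced by the Lieb--Robinson velocity and the need to take $T$ large to suppress the tail; this is exactly what is resolved by the \emph{exponential} decay of $f_\beta$, which lets $T$ grow only logarithmically in $1/\eps$ and keeps the required length scale linear in $\log(1/\eps)$. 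A secondary technical point is that $W$ is extensive, so the Lieb--Robinson estimates must be summed over all its terms; the geometric decay in the distance ensures this sum is controlled by the single nearest term.
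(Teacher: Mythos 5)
Your proof is correct, and its skeleton is the same as the paper's: split the defining integral at a time cutoff, bound the tail using the exponential decay of $f_\beta$ together with $\|f_\beta\|_{L^1}=1$, bound the short-time part by a Lieb--Robinson-type comparison of the two dynamics, and choose the cutoff $T\sim\beta\log(1/\eps)$ so that the required length scale is $l\ge A\log(1/\eps)+B$. The one genuine difference is how the short-time term is handled. The paper cites a ready-made Lieb--Robinson bound for truncated time evolutions (Theorem~\ref{thm:liebRob}, imported from \cite{perezgarcia2020} and proven there via the combinatorial quantities $\Omega_k^*$), applied directly to $\sup_{|t|\le a}\|\Gamma^t_{H(s)}(h_{L,L+1})-\Gamma^t_{H_l(s)}(h_{L,L+1})\|$. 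You instead rederive such a bound from scratch: Duhamel interpolation in the difference Hamiltonian $W=H(s)-H_l(s)=H_{[1,L-l]}$, followed by the standard commutator Lieb--Robinson bound applied term by term to $W$ and summed as a geometric series dominated by the nearest term at distance $l$. Both routes produce a bound of the form $e^{C|t|-\mu l}$ (your integration over $u$ actually gives an extra factor $|t|$, i.e.\ $Te^{\mu vT}e^{-\mu l}$, but this is harmlessly absorbed into the constants), and both resolve the same tension between light-cone growth and tail suppression in the way you describe. Your version is more elementary and self-contained, requiring no external truncated-evolution theorem; the paper's version keeps the lemma's proof to a few lines and inherits explicit, $\beta$-traceable constants (e.g.\ $C=56\beta$, $E=48\beta$) from the cited machinery, which is convenient when instantiating the algorithm's parameters concretely.
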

\begin{proof}
The defintion of $\Phi^{H(s)}_\beta(h_{L,L+1})$ contains time-evolved interactions for all times with a time-depdent weight $f_\beta(t)$.
We split the definition into short time scales, for which we use a Lieb-Robinson bound (see Appendix~\ref{sec:liebRob}), and long time-scales, whose contribution can be bounded using the decay of $f_\beta(t)$.
\begin{align*}
\left|\Phi_\beta^{H(s)}(h_{L,L+1})-\Phi_\beta^{H_l(s)}(h_{L,L+1})\right|&\le\sup_{|t|\le a}\left\|\Gamma^t_{H(s)}(h_{L,L+1})-\Gamma^t_{H_l(s)}(h_{L,L+1})\right\|\\
&\phantom=+4\left\|h_{L,L+1}\right\|\int_a^\infty f_\beta(t)dt\\
\end{align*}
The integral can be upper bounded by (see \cite[Appendix B]{alhambra2023})
\[
\int_a^\infty f_\beta(t)dt\le\frac{4}{\pi^2(e^{\pi a/\beta}-1)}\le\frac\eps8
\]
if we set 
\[
a=\frac\beta\pi\log(\frac{8}{\pi^2\eps}+1).
\]
Now, by Theorem \ref{thm:liebRob} setting
\begin{align*}
l&\ge\frac{E\beta}{\pi}e^{2D}\log(\frac8{\pi^2\eps}+1)\\
l&\ge\frac{C\beta}{D\pi}\log(\frac8{\pi^2\eps}+1)+\frac1D\log(\frac4\eps)\\
\end{align*}
and choosing $D=1$, we have that the norm bound on the first term is also $\le\eps/2$.
Combining the constants in the conditions on $l$ closes the proof.
\end{proof}
Replacing the $\Phi$ by these local versions in the definition of QBP operators immediately yields local approximations for those.
\begin{lemma}\label{lem:etaLocality}
For $l\ge A\log(1/\eps)+B$, where $A,B$ are $\beta$-dependent constants, we have
\begin{align*}
\|\eta-\eta_l\|&\le\eps\\
\|\eta_l\|,\|\eta\|&\le\exp(\beta/2).
\end{align*}
Here, $\eta_l$ is an operator with support on sites $L-l\ldots L+1$:
\[
\eta_l=\mathcal{T}\exp(-\frac\beta2\int_0^1 \Phi^{H_l(s)}_\beta(h_{L,L+1})ds)
\]
\end{lemma}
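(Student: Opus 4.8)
The plan is to regard both $\eta$ and $\eta_l$ as the time-one endpoints of the time-ordered flows generated by $G(s)=-\tfrac\beta2\Phi^{H(s)}_\beta(h_{L,L+1})$ and $G_l(s)=-\tfrac\beta2\Phi^{H_l(s)}_\beta(h_{L,L+1})$, i.e.\ the solutions at $s=1$ of $\partial_s U(s)=G(s)U(s)$ and $\partial_s U_l(s)=G_l(s)U_l(s)$ with $U(0)=U_l(0)=\id$. The two generators differ only through the replacement of $H(s)$ by the truncated Hamiltonian $H_l(s)$ inside $\Phi$, so Lemma~\ref{lem:QBPLocality} is exactly what bounds $\|G(s)-G_l(s)\|$, uniformly in $s$. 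The whole statement then reduces to a Grönwall estimate for the operator norms and a Duhamel (variation-of-parameters) estimate for the difference, with the key input applied pointwise in $s$.

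For the norm bounds I would first invoke the uniform estimate $\|\Phi^{H(s)}_\beta(h_{L,L+1})\|\le\|h_{L,L+1}\|\le1$ already recorded after the definition of the QBP operators, which holds verbatim with $H(s)$ replaced by $H_l(s)$. Hence $\|G(s)\|,\|G_l(s)\|\le\beta/2$ for all $s\in[0,1]$. Differentiating $\|U(s)\|$ and applying Grönwall's inequality gives $\|\eta\|=\|U(1)\|\le\exp(\int_0^1\|G(s)\|\,ds)\le\exp(\beta/2)$, and identically $\|\eta_l\|\le\exp(\beta/2)$. The same argument bounds the interpolating propagator $W(s',s)=\mathcal{T}\exp(\int_{s}^{s'}G(\tau)\,d\tau)$ by $\|W(1,s)\|\le\exp(\beta/2)$, which I will reuse below.

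For the difference I would set $D(s)=U(s)-U_l(s)$, note $D(0)=0$ and $\partial_s D(s)=G(s)D(s)+(G(s)-G_l(s))U_l(s)$, and solve this inhomogeneous linear equation by the interpolating propagator to obtain
\[
\eta-\eta_l=\int_0^1 W(1,s)\,(G(s)-G_l(s))\,U_l(s)\,ds .
\]
Taking norms and using $\|W(1,s)\|,\|U_l(s)\|\le e^{\beta/2}$ together with $\|G(s)-G_l(s)\|=\tfrac\beta2\|\Phi^{H(s)}_\beta(h_{L,L+1})-\Phi^{H_l(s)}_\beta(h_{L,L+1})\|\le\tfrac\beta2\eps'$ from Lemma~\ref{lem:QBPLocality} yields $\|\eta-\eta_l\|\le\tfrac\beta2 e^{\beta}\eps'$, valid whenever $l\ge A'\log(1/\eps')+B'$ for the constants $A',B'$ of that lemma. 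Choosing $\eps'=2\eps e^{-\beta}/\beta$ makes the right-hand side at most $\eps$; since $\log(1/\eps')=\log(1/\eps)+\log(\beta e^{\beta}/2)$, the extra term is a $\beta$-dependent additive shift absorbed into $B$, giving the claimed condition $l\ge A\log(1/\eps)+B$. Finally, the support claim follows because $\Phi^{H_l(s)}_\beta(h_{L,L+1})$ is the $f_\beta$-weighted integral of $e^{iH_l(s)t}h_{L,L+1}e^{-iH_l(s)t}$, and both $H_l(s)$ and $h_{L,L+1}$ are supported on $[L-l,L+1]$; hence so is the conjugation, its integral, and the time-ordered exponential $\eta_l$.

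The main obstacle I anticipate is the Duhamel step: making the variation-of-parameters representation rigorous for non-commuting, $s$-dependent generators and verifying that the interpolating propagator $W(1,s)$ is well defined and uniformly bounded by $e^{\beta/2}$. This is standard ODE theory because the $\Phi$ operators are bounded and continuous in $s$, so the genuine analytic content is fully contained in the locality estimate of Lemma~\ref{lem:QBPLocality}; everything else is Grönwall bookkeeping and tracking the $\beta$-dependent constants.
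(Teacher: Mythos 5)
Your proof is correct, and it relies on the same two inputs as the paper's proof: the uniform bound $\|\Phi^{H(s)}_\beta(h_{L,L+1})\|\le\|h_{L,L+1}\|\le 1$ and Lemma~\ref{lem:QBPLocality} applied pointwise in $s$. The difference is purely in the technical device used to compare the two time-ordered exponentials. The paper works directly with the Dyson series that \emph{defines} $\eta(s)$: it telescopes each order-$k$ product $\Phi^{H(s_1)}_\beta\cdots\Phi^{H(s_k)}_\beta-\Phi^{H_l(s_1)}_\beta\cdots\Phi^{H_l(s_k)}_\beta$ into $k$ single-factor replacements, yielding $\|\eta-\eta_l\|\le\eps_{QBP}\sum_{k\ge1}k(\beta/2)^k/k!=\eps_{QBP}\tfrac\beta2 e^{\beta/2}$, and the norm bound $\|\eta\|\le e^{\beta/2}$ falls out of the same series. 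You instead identify $\eta$ and $\eta_l$ as solutions of the ODEs $\partial_s U=G(s)U$, use Gr\"onwall for the norm bounds, and Duhamel for the difference, getting $\|\eta-\eta_l\|\le\tfrac\beta2 e^{\beta}\eps'$. Your constant is slightly weaker (you bound $\|W(1,s)\|\,\|U_l(s)\|\le e^{\beta}$ crudely; the $s$-dependent bounds $e^{\beta(1-s)/2}e^{\beta s/2}$ would recover the paper's $e^{\beta/2}$), but this is immaterial since all constants are absorbed into $A,B$. What the paper's route buys is self-containedness: since $\eta$ is defined by its series, no identification of the series with an ODE solution is needed. What your route buys is that the combinatorics disappears into standard well-posedness theory for bounded, $s$-continuous generators, and it scales to situations where a term-by-term series comparison would be awkward. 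One small point in your favor: you actually justify the support claim for $\eta_l$ (which the paper only asserts), and your argument for it is right. One small point to tighten: Lemma~\ref{lem:QBPLocality} is stated for $0<\eps'<1$, and your choice $\eps'=2\eps e^{-\beta}/\beta$ can exceed $1$ for small $\beta$; replacing it by $\min(\eps',1/2)$ and absorbing the shift into $B$ fixes this (the paper's own constant bookkeeping has the same implicit step).
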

\begin{proof}
Recall that
\[
\left\|\Phi^{H(s)}_\beta(h_{L,L+1})\right\|,\left\|\Phi^{H_l(s)}_\beta(h_{L,L+1})\right\|\le\|h_{L,L+1}\|\le1.
\]
By using the triangle inequality we find
\begin{align*}
\|\eta-\eta_l\|&\le\sum_{k\ge1}\left(\frac\beta2\right)^k\int_0^1\int_0^{s_1}\cdots\int_0^{s_{k-1}}\Big\|\Phi^{H(s_1)}_\beta(h_{L,L+1})\ldots\Phi^{H(s_k)}_\beta(h_{L,L+1})\\
&\qquad-\Phi^{H_l(s_1)}_\beta(h_{L,L+1})\ldots\Phi^{H_l(s_k)}_\beta(h_{L,L+1})\Big\|ds_1\ldots ds_k\\
&\le\sum_{k\ge1}\left(\frac\beta2\right)^k\int_0^1\int_0^{s_1}\cdots\int_0^{s_{k-1}}\|h_{L,L+1}\|^{k-1}\\
&\qquad\sum_{m=1}^k \|\Phi^{H(s_m)}_\beta(h_{L,L+1})-\Phi^{H_l(s_m)}_\beta(h_{L,L+1})\|ds_1\ldots ds_k\\
&\le \eps_{QBP}\sum_{k\ge1} \frac{k(\beta/2)^k}{k!}\\
&=\eps_{QBP} \frac\beta2e^{\beta/2}\\
&\le\eps,
\end{align*}
where we assumed
\[
l\ge A\log(\frac{\beta e^{\beta/2}}{2\eps})+B
\]
for $A,B$ such that $\eps_{QBP}\le \eps \frac2\beta e^{-\beta/2}$ from Lemma~\ref{lem:QBPLocality}.
Redefining the constants we conclude the proof.
\end{proof}
\subsection{Local Indisinguishability and Alternative Proof of Lemma~\ref{lem:partFuncRatio}}

In the previous sections we showed that the ratio of partition functions can be written in terms of QBP operators and that those have approximations by local operators.
Connecting this with the local indistinguishability that we already saw in Lemma~\ref{lem:localIndistinguishability}, we can reprove Lemma~\ref{lem:partFuncRatio}, which we restate here.
\begin{lemma}[Restatement of Lemma~\ref{lem:partFuncRatio}]
For some $\beta$-dependent constants $A,\xi$, we have
\[
\left|e^{-\beta f(h)}-\frac{\Tr[\exp(-\beta H_{[1,l+1]})]}{\Tr[\exp(-\beta H_{[1,l]})]}\right|\le\eps
\]
if $l\ge \xi\log(A/\eps)$
\end{lemma}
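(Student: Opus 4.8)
The plan is to run the same Cauchy-sequence argument as the main-text proof of Lemma~\ref{lem:partFuncRatio}, now feeding in the three ingredients assembled in this appendix. Writing $a_L:=Z_{L+1}/Z_L$, I would first invoke Lemma~\ref{lem:QBP_F} to write $a_L=d\,\Tr[\eta\,\rho\,\eta^\dagger]$, where $\rho:=\exp(-\beta H(0))/\Tr[\exp(-\beta H(0))]$ with $H(0)=H_{[1,L]}$ acting trivially on the added site $L+1$, and then replace $\eta$ by its local truncation $\eta_\ell$ supported on $[L-\ell,L+1]$. The cost of this replacement is controlled by the telescoping identity $\eta\rho\eta^\dagger-\eta_\ell\rho\eta_\ell^\dagger=(\eta-\eta_\ell)\rho\,\eta^\dagger+\eta_\ell\,\rho\,(\eta-\eta_\ell)^\dagger$ together with the bounds $\|\eta\|,\|\eta_\ell\|\le e^{\beta/2}$ and $\|\eta-\eta_\ell\|\le\eps_1$ of Lemma~\ref{lem:etaLocality}, giving an error at most $2e^{\beta/2}\eps_1$, with $\eps_1$ exponentially small in the truncation length $\ell$.

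Next I would recast the truncated quantity as the expectation value of a \emph{local boundary observable} in a genuine Gibbs state. Since $H(0)$ acts as the identity on site $L+1$, one has $\rho=\rho_{[1,L]}\otimes\id/d$, so by cyclicity $d\,\Tr[\eta_\ell\,\rho\,\eta_\ell^\dagger]=\Tr[O_\ell\,\rho_{[1,L]}]$, where $O_\ell$ is the partial trace of $\eta_\ell^\dagger\eta_\ell$ over the added site, supported on $[L-\ell,L]$ with $\|O_\ell\|\le d\,e^{\beta}$. The key structural point is that, by translation invariance of the chain, $O_\ell$ is—up to a lattice translation—\emph{the same} observable for every $L$: it is anchored at the perturbed boundary and depends on $L$ only through its position, the bulk of $\rho_{[1,L]}$ extending away from it over a distance $L-\ell$.

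The third step is to apply local indistinguishability. After translating so that the perturbed bond sits at a fixed location and reflecting the chain, the situation matches the hypothesis of Lemma~\ref{lem:localIndistinguishability}—a fixed observable at one boundary with the system extended away from it—so that $|\Tr[O_\ell\rho_{[1,L]}]-\Tr[O_\ell\rho_{[1,L']}]|\le\|O_\ell\|\mathcal K e^{-\alpha(L-\ell)}$ for all $L'\ge L$. Combined with the truncation error of the first step, this shows that $a_L$ is Cauchy with an error decaying exponentially in $L-\ell$; choosing $\ell=L/2$ balances the two contributions to $A'e^{-L/\xi'}$. Hence $a_\infty:=\lim_L a_L$ exists, and letting $L'\to\infty$ gives $|a_l-a_\infty|\le A e^{-l/\xi}$ after renaming the chain length to $l$. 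Finally, exactly as in the main text, the identity $f(h)=-\tfrac1\beta\log\lim_m a_m$ together with $|\log a_m|\le\beta$ from \cite{lenci2005} identifies $a_\infty=e^{-\beta f(h)}$, and rewriting the bound as the threshold $l\ge\xi\log(A/\eps)$ completes the proof.

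I expect the main obstacle to lie in the second and third steps, specifically in matching the boundary-anchored observable $O_\ell$ to the left-boundary, right-extension convention of Lemma~\ref{lem:localIndistinguishability}. This requires using translation invariance to fix the perturbed bond and a reflection to bring $O_\ell$ to the correct boundary, together with checking that the reflected finite-range interaction still satisfies the hypotheses under which local indistinguishability was established (so that the same constants $\mathcal K,\alpha$ apply), and carefully tracking the identity padding on the added site $L+1$ through the cyclicity and partial-trace manipulations that produce $O_\ell$.
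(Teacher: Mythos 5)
Your proof is correct, and it rests on exactly the three ingredients the paper's appendix proof uses (Lemma~\ref{lem:QBP_F}, Lemma~\ref{lem:etaLocality}, Lemma~\ref{lem:localIndistinguishability}), but the decomposition is organized differently. The paper compares $Z_{L+1}/Z_L$ directly with the \emph{target} $Z_{l+1}/Z_l$, and for this it exploits a fact it explicitly highlights as the purpose of defining $\eta_l$ through truncated time evolutions: $\eta_l$ is the \emph{exact} QBP operator for the truncated Hamiltonian $H_l$, so the target ratio is written exactly as $d\Tr[\eta_{k_1+k_2}\,\rho_{\mathrm{small}}\,\eta_{k_1+k_2}^\dagger]$; local indistinguishability is then applied once, between the large Gibbs state and the truncated one, with two length scales $k_1,k_2$. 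You instead prove a Cauchy criterion for $a_L$ against $a_{L'}$, $L'\ge L$: you invoke the QBP formula at \emph{both} sizes, truncate $\eta$ at both sizes, and use translation invariance to transport the boundary observable $O_\ell$ between system sizes, so that local indistinguishability compares two \emph{large} Gibbs states. This avoids the exact-QBP-for-$H_l$ observation entirely (your argument would survive other definitions of the truncated operator, e.g.\ the projection/partial-trace one of earlier works), at the price of leaning on translation invariance to identify $O_\ell$ across sizes; the paper's version is marginally more economical and needs local indistinguishability only at the target scale. Two further remarks. First, you are right that a reflection is needed to match the left-boundary convention of Lemma~\ref{lem:localIndistinguishability}; note the paper's own proof has the identical issue (its observable $\Tr_{L+1}[\eta_{k_1}^\dagger\eta_{k_1}]$ also sits at the right boundary) and passes over it silently, whereas your handling is sound: the reflected chain is again a translation-invariant nearest-neighbour system with the same interaction norm, so the same constants apply. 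Second, your bookkeeping drops a factor of $d$ in the truncation error (the expression carries the prefactor $d$, so the bound should read $2de^{\beta/2}\eps_1$) and has an off-by-one in the exponent $e^{-\alpha(L-\ell)}$; both are immaterial to the claimed exponential decay and the final threshold $l\ge\xi\log(A/\eps)$.
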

\begin{proof}
We will start from the ratio of partition functions at finite $L\ge l$.
The following bounds will hold uniformly in $L$ and thereby give the desired result in the limit.
Note that in the following the traces in each fraction are on equally large subsystems which causes the factors of $d$.
By writing $l=k_1+k_2$ as a sum of the decay lengths for the QBP and the local indistinguishability, we estimate
\begin{align}
&\left|\frac{Z_{L+1}}{Z_L}-\frac{Z_{l+1}}{Z_l}\right|\\
&\qquad=\left|d\frac{\Tr[\exp(-\beta H(1))]}{\Tr[\exp(-\beta H(0))]}-d\Tr[\eta_{k_1+k_2}\frac{\exp(-\beta H_{k_1+k_2})}{\Tr[\exp(-\beta H_{k_1+k_2})]}\eta_{k_1+k_2}^\dagger]\right|\\
&\qquad\le d\left|\Tr[\eta\frac{\exp(-\beta H_{[1,L]})}{\Tr[\exp(-\beta H_{[1,L]})]}\eta^\dagger]-\Tr[\eta_{k_1}\frac{\exp(-\beta H_{[1,L]})}{\Tr[\exp(-\beta H_{[1,L]})]}\eta_{k_1}^\dagger]\right|\label{eq:firstQBP}\\
&\qquad\phantom{\le}+d\left|\Tr[\eta_{k_1}\frac{\exp(-\beta H_{[1,L]})}{\Tr[\exp(-\beta H_{[1,L]})]}\eta_{k_1}^\dagger]-\Tr[\eta_{k_1}\frac{\exp(-\beta H_{k_1+k_2})}{\Tr[\exp(-\beta H_{k_1+k_2})]}\eta_{k_1}^\dagger]\right|\label{eq:secondQBP}\\
&\qquad\phantom{\le}+d\left|\Tr[\eta_{k_1}\frac{\exp(-\beta H_{k_1+k_2})}{\Tr[\exp(-\beta H_{k_1+k_2})]}\eta_{k_1}^\dagger]-\Tr[\eta_{k_1+k_2}\frac{\exp(-\beta H_{k_1+k_2})}{\Tr[\exp(-\beta H_{k_1+k_2})]}\eta_{k_1+k_2}^\dagger]\right|\label{eq:thirdQBP}
\end{align}
Despite being defined as the \emph{approximate} QBP operator for $H_{[1,L]}$, note that in the first line we used the fact that by its definition $\eta_l$ is the \emph{exact} QBP for the Hamiltonian $H_l$.
This feature is due to our definition of $\eta_l$ in terms of truncated time-evolution operators.

For the term~\eqref{eq:firstQBP}, note that it is an expectation value of an observable that obeys the norm bound
\[
\left\|\Tr_{L+1}\left[\eta^\dagger\eta-\eta_{k_1}^\dagger\eta_{k_1}\right]\right\|\le d\left\|\eta^\dagger\eta-\eta_{k_1}^\dagger\eta_{k_1}\right\|\le d(\left\|\eta\right\|+\left\|\eta_{k_1}\right\|)\|\eta-\eta_{k_1}\|\le\frac\eps3
\]
by choosing $k_1$ according to Lemma~\ref{lem:etaLocality}.
Analogously by applying the locality result twice we bound the term~\eqref{eq:thirdQBP}.

The term~\eqref{eq:secondQBP} can be bounded using local indistinguishability as of Lemma~\ref{lem:localIndistinguishability} applied to the local observable
\[
\Tr_{L+1}\left[\eta_{k_1}^\dagger\eta_{k_1}\right],
\]
which obeys the bound from Lemma~\ref{lem:etaLocality} and for the decay length $k_1+k_2-1$.
Hence, adding a condition to the minimum size of $k_1$ (or $k_2$), we can bound this term by $\eps/3$ as well.

Combining all the conditions on $k_1$ and $k_2$, which gives a sufficient condition for $l$ and taking the limit $L\to\infty$, we obtain the result.
\end{proof}

\section{Lieb-Robinson Bounds for Truncated Time Evolution}\label{sec:liebRob}
We restate Theorem 2.4 from \cite{perezgarcia2020}.
We modify the formulation in that we assume 2-local interactions instead of only exponentially decaying ones.
For finite interactions the result becomes valid for all temperatures.
The result does \emph{not} require translation-invariance, but we only state it for our choice of perturbed Hamiltonian.

We recall the time evolution operator
\[
\Gamma_H^t(A)=e^{itH}Ae^{-itH}
\]
\begin{theorem}[{\cite[Theorem 2.4]{perezgarcia2020}}]\label{thm:liebRob}
For a an operator $A$ with support on $L,L+1$ and $\beta$-dependent constants $C, E$,
\begin{align*}
\left\|\Gamma_{H(s)}^t(A)-\Gamma_{H_l(s)}^t(A)\right\|&\le\|A\| 2 e^{8|t|\beta}\sum_{k\ge l+1}\Omega_k^*(4|t|)\\
&\le 2 \|A\| e^{C|t|-Dl}.
\end{align*}
for any choice of $D>0$ as long as $l\ge E|t|e^{2D}$.
\end{theorem}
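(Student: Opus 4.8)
The statement is a quasi-locality (Lieb--Robinson) estimate asserting that truncating the Hamiltonian outside a neighbourhood of $\supp(A)=\{L,L+1\}$ perturbs the Heisenberg evolution of $A$ only by an amount that is exponentially small in the truncation length $l$, once $l$ exceeds the linear light cone $\sim|t|$. Since it is exactly Theorem 2.4 of \cite{perezgarcia2020} specialized to a finite-range (here $2$-local) interaction, the plan is to reduce to that result while recording the two modifications: a $2$-local interaction is a fortiori exponentially decaying, so the hypotheses are met, and for finite range the threshold temperature of \cite{perezgarcia2020} is vacuous, so the bound holds for all $\beta$. For completeness I would also reconstruct the mechanism producing the displayed bound, as outlined below.

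First I would pass to the interaction picture to isolate the effect of the removed terms $V:=H(s)-H_l(s)=H_{[1,L-l]}=\sum_{i=1}^{L-l-1}h_{i,i+1}$. Differentiating the interpolation $\Phi(\tau)=\Gamma_{H_l(s)}^{t-\tau}\bigl(\Gamma_{H(s)}^{\tau}(A)\bigr)$ and using that conjugation commutes with the generator of $\Gamma_{H_l(s)}$, one finds $\frac{d}{d\tau}\Phi(\tau)=\Gamma_{H_l(s)}^{t-\tau}\bigl(i[V,\Gamma_{H(s)}^{\tau}(A)]\bigr)$, whence, integrating from $0$ to $t$ and using unitary invariance of the norm,
\[
\bigl\|\Gamma_{H(s)}^t(A)-\Gamma_{H_l(s)}^t(A)\bigr\|\le\int_0^{|t|}\bigl\|[V,\Gamma_{H(s)}^{\tau}(A)]\bigr\|\,d\tau\le\sum_{i=1}^{L-l-1}\int_0^{|t|}\bigl\|[h_{i,i+1},\Gamma_{H(s)}^{\tau}(A)]\bigr\|\,d\tau.
\]
This reduces everything to a standard Lieb--Robinson commutator bound for the \emph{untruncated} evolution $\Gamma_{H(s)}$, uniformly in $s\in[0,1]$.

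For the commutator I would use the nested-commutator series $\Gamma_{H(s)}^{\tau}(A)=\sum_{k\ge0}\frac{(i\tau)^k}{k!}\,\mathrm{ad}_{H(s)}^{\,k}(A)$ and track the growth of support: each application of $\mathrm{ad}_{H(s)}$ enlarges the support by one site, so a term $h_{i,i+1}$ at distance $k_0$ from $\{L,L+1\}$ can fail to commute with $\Gamma_{H(s)}^{\tau}(A)$ only through orders $k\ge k_0$. The combinatorial weight of these surviving terms is precisely what the coefficients $\Omega_k^*(\cdot)$ in \cite{perezgarcia2020} encode, and summing the geometric-type series over $i$ makes the nearest removed term, at distance $l$, dominate; together with the prefactor $e^{8|t|\beta}$ from their bookkeeping this yields the first displayed inequality. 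The second inequality is then a routine factorial-tail estimate: since $\Omega_k^*(4|t|)$ decays like $(4|t|)^k/k!$, Stirling gives $\Omega_k^*(4|t|)\lesssim(4e|t|/k)^k$, and for $k\ge l+1\ge E|t|e^{2D}$ with $E$ chosen appropriately each term is $\le e^{-Dk}$, so the tail sums to $O(e^{-Dl})$ while $e^{8|t|\beta}$ is absorbed into $e^{C|t|}$. The condition $l\ge E|t|e^{2D}$ is exactly the ``outside the light cone'' requirement, with $D$ tuning the decay rate against the required margin.

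The main obstacle is the commutator step: obtaining the series bound with the correct weights uniformly in the interpolation parameter $s$ and, crucially, summing over \emph{all} removed terms $h_{i,i+1}$ (whose distances to $A$ range from $l$ up to $L-2$) so that the total is governed by the single nearest term rather than accumulating an $L$-dependent factor. This uniformity in the system size $L$ is what allows the bound to pass to the thermodynamic limit, and it is the delicate point; the Duhamel identity and the factorial-tail estimate are mechanical by comparison.
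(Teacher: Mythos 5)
Your overall plan coincides with the paper's proof: the first displayed inequality is taken verbatim from \cite[Theorem 2.4]{perezgarcia2020} (the paper does not reprove it, so your Duhamel/nested-commutator reconstruction is additional rather than required), and the second inequality is obtained by specializing the weights $\Omega_k$ to a $2$-local interaction and running a factorial tail estimate under the light-cone condition $l\ge E|t|e^{2D}$. However, your key quantitative claim in that tail estimate is false. For a $2$-local interaction one has $\Omega_0=\Omega_1=\Omega_2=2\beta$ and $\Omega_k=0$ for $k\ge3$, so a composition $\alpha\in\mathbb{N}^n$ with $|\alpha|=k$ needs only $n\ge\lceil k/2\rceil$ factors, and the correct bound is
\[
\Omega_k^*(4|t|)\le\sum_{n\ge\lceil k/2\rceil}\frac{(24|t|\beta)^n}{n!}\le e^{24|t|\beta}\,\frac{(24|t|\beta)^{\lceil k/2\rceil}}{\lceil k/2\rceil!},
\]
i.e.\ the factorial decay is in $\lceil k/2\rceil$, not in $k$. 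Your claim $\Omega_k^*(4|t|)\lesssim(4|t|)^k/k!$ fails already at $k=2$, where the $n=1$ term contributes $2\beta\cdot4|t|=8\beta|t|$, which dominates $(4|t|)^2/2$ for small $|t|$; for large $k$ the $n=\lceil k/2\rceil$ term likewise exceeds your bound. This halved factorial is precisely the reason the threshold carries $e^{2D}$ rather than $e^{D}$: one needs $\lceil l/2\rceil\ge24|t|\beta e^{2D}$, i.e.\ $l\ge48|t|\beta e^{2D}=E|t|e^{2D}$, so that $\lceil l/2\rceil\left(\log(24|t|\beta)-\log\lceil l/2\rceil\right)\le-2D\lceil l/2\rceil\le-Dl$, which is how the paper concludes with $C=56\beta$. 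Since you nonetheless imposed the correct, stronger condition $l\ge E|t|e^{2D}$, your argument closes once the decay rate is corrected; so this is a repairable slip rather than a structural gap, but as written the justification of the second inequality rests on a false estimate, and the provenance of the $e^{2D}$ in your own hypothesis is left unexplained.
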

\begin{proof}
We recall some definitions from \cite{perezgarcia2020} specialising to the case of a 2-local 1D Hamiltonian
\[
\Omega_0=\Omega_1=\Omega_2=2\beta\qquad\qquad0=\Omega_3=\ldots
\]
\begin{align*}
\Omega_k^*(x)&=\sum_{n=1}^\infty\left(\sum_{\substack{\alpha\in\mathbb{N}^n\\|\alpha|=k}}\prod_{j=1}^n\Omega_{\alpha_j}\right)\frac{x^n}{n!}\\
&\le\sum_{n\ge\lceil k/2\rceil}\frac{(6x\beta)^n}{n!}
\end{align*}
We make repeated use of the tail bound for the exponential series
\[
\sum_{k\ge l}\frac{x^k}{k!}\le \frac{e^x x^l}{l!}
\]
We start from \cite[Theorem 2.4]{perezgarcia2020}
\begin{align*}
\|\Gamma_{H(s)}^t(A)-\Gamma_{H_l(s)}^t(A)\|&\le\|A\| |\supp(A)| e^{8|t|\beta}\sum_{k\ge l+1}\Omega_k^*(4|t|)\\
&\le2\|A\|e^{8|t|\beta}\sum_{k\ge l+1}e^{24|t|\beta}\frac{(24|t|\beta)^{\lceil k/2\rceil}}{\lceil k/2\rceil!}\\
&\le2\|A\|e^{32|t|\beta}e^{24|t|\beta}\frac{(24|t|\beta)^{\lceil l/2\rceil}}{\lceil l/2\rceil!}\\
&\le2\|A\|e^{56|t|\beta}e^{\lceil l/2\rceil\log(24|t|\beta)-\lceil l/2\rceil\log(\lceil l/2\rceil)}\\
\end{align*}
Choosing $l\ge48|t|\beta e^{2D}=E|t|e^{2D}$ we find
\[
\|\Gamma_{H(s)}^t(A)-\Gamma_{H_l(s)}^t(A)\|\le 2\|A\|e^{C|t|-Dl}
\]
where $C=56\beta$.
\end{proof}

\end{document}